\journalname{Submitted}
\newcommand*{\eps}{{\varepsilon}}
\spnewtheorem{algorithm}{Algorithm}{\bf}{\rm}
\begin{document}

\title{Chi-square simulation of the CIR process and the Heston model}
\author{Simon J.A. Malham \and Anke Wiese}
\authorrunning{Malham and Wiese}

\institute{
Simon J.A. Malham \and Anke Wiese \at
Maxwell Institute for Mathematical Sciences \\
and School of Mathematical and Computer Sciences\\
Heriot-Watt University, Edinburgh EH14 4AS, UK \\
Tel.: +44-131-4513200\\
Fax: +44-131-4513249\\
\email{simonmalham@gmail.com}\\
\email{A.Wiese@hw.ac.uk}}


\date{Received: 2nd July 2012}

\voffset=10ex 

\maketitle
\begin{abstract}
The transition probability of a Cox--Ingersoll--Ross 
process can be represented by a non-central chi-square density. 
First we prove a new representation for the central chi-square density 
based on sums of powers of generalized Gaussian random variables.
Second we prove Marsaglia's polar method extends to this distribution,
providing a simple,  exact, robust and efficient 
acceptance-rejection method for generalized Gaussian sampling and 
thus central chi-square sampling. 
Third we derive a simple, high-accuracy, robust and efficient 
direct inversion method for generalized Gaussian sampling based
on the Beasley--Springer--Moro method. 
Indeed the accuracy of the approximation to the inverse 
cumulative distribution function is to the tenth decimal place. 
We then apply our methods to non-central chi-square variance sampling
in the Heston model. We focus on the case when the number of 
degrees of freedom is small and the zero boundary is attracting and 
attainable, typical in foreign exchange markets.  Using the additivity
property of the chi-square distribution, our methods apply in
all parameter regimes. 
\keywords{generalized Gaussian \and generalized Marsaglia method \and
direct inversion \and chi-square sampling \and CIR process \and stochastic volatility}
\subclass{60H10 \and 60H35 \and 93E20 \and 91G20}
\end{abstract}

\section{Introduction}\label{sec:Intro}
The mean-reverting square-root process or  Cox--Ingersoll--Ross (CIR) process is frequently used in
finance and economics to model the evolution of key financial variables, most notably 
to model the short rate of interest (Cox, Ingersoll and Ross~\cite{CIR}) and
in the Heston stochastic volatility model (Heston~\cite{H}). 
Other applications include the modelling of mortality intensities
and of default intensities in credit risk models,  for example. 
The CIR process can be expressed in the form
\begin{equation*}
\mathrm{d}V_t=\kappa(\theta-V_t)\,\mathrm{d}t
+\varepsilon\sqrt{V_t}\,\mathrm{d}W^1_t,
\end{equation*} 
where $W^1$ is a Wiener process and  $\kappa$, $\theta$
and $\varepsilon$ are positive constants. 
The Heston model is a two-factor model,
in which one component $S$ describes the evolution of a financial variable
such as a stock index or exchange rate,
and another component $V$ describes the stochastic variance of its returns.
Indeed the stochastic variance $V$ evolves according to the CIR process
described above.  The Heston model is then completed by prescribing
the evolution of $S$ by 
\begin{equation*}
\mathrm{d}S_t=\mu S_t\,\mathrm{d}t
+\sqrt{V_t}\,S_t\,\bigl(\rho\,\mathrm{d}W^1_t
+\sqrt{1-\rho^2}\,\mathrm{d}W^2_t\bigr),
\end{equation*} 
where $W^2_t$ is an independent scalar Wiener process. 
The additional parameter $\mu$ is positive, while $\rho\in(-1,1)$.
 
The transition probability density of the CIR process is
known explicitly, it can be represented by a non-central chi-square density.  
Depending on the number of degrees of freedom 
$\nu\coloneqq4\kappa\theta/\varepsilon^2$,
there are fundamental differences in the
behaviour of the CIR process. If $\nu$ is larger
or equal to 2, the zero boundary is unattainable; if it is smaller than 2, 
the zero boundary is attracting and attainable. At the zero boundary though,
the solution is immediately reflected into the positive domain. This behaviour in
the latter case is particularly difficult to capture numerically.

A number of successful simulation schemes have been
developed for the non-attainable zero boundary case. 
There are schemes based on implicit time-stepping integrators, 
see for example Alfonsi~\cite{Al2005},  Kahl and Schurz~\cite{KS}
and Dereich, Neuenkirch and Szpruch~\cite{DNS}. 
Other time discretization approaches
involve splitting the drift and diffusion vector fields
and evaluating their separate flows (sometimes exactly)
before they are recomposed together, typically using the Strang ansatz.
See for example Higham, Mao and Stuart~\cite{HMS} and
Ninomiya and Victoir~\cite{NV}. 
However, these splitting methods and the implicit methods 
only apply in the non-attracting zero boundary case.  
Recently, Alfonsi~\cite{Al2007} has combined a splitting method with 
an approximation using a binary random variable near the zero boundary 
to obtain a weak approximation method for the full parameter regime.
Moro and Schurz~\cite{MS} have also successfully combined 
exponential splitting with exact simulation. Dyrting~\cite{Dy}
outlines and compares several different series and asymptotic
approximations for non-central chi-square distribution.

Other direct discretization approaches, that can be applied to
the attainable and unattainable zero boundary case
are based on forced Euler-Maruyama approximations and typically 
involve negativity truncations; some of these methods are 
positivity preserving. See for example Deelstra and Delbaen~\cite{DD},
Bossy and Diop~\cite{BD} and also Berkaoui, Bossy and Diop~\cite{BBD}, 
Lord, Koekkoek and Van Dijk~\cite{LKVD},
as well as Higham and Mao~\cite{HM}, among others. 
These methods  all converge to the exact solution, 
but their rate of strong convergence and discretization errors are 
difficult to establish. The full truncation method of Lord, Koekkoek and 
Van Dijk~\cite{LKVD} has in practice shown to be the leading method 
in this class.

Exact simulation methods typically sample from the known 
non-central chi-square distribution $\chi_\nu^2(\lambda)$ for the transition
probability of the CIR process $V$ (see Cox, Ingersoll and Ross~\cite{CIR} and 
Glasserman~\cite[Section~3.4]{Glass}). Broadie and Kaya~\cite{BK}
proposed sampling from $\chi_\nu^2(\lambda)$ as follows. When $\nu>1$, 
$\chi_\nu^2(\lambda)=\bigl(\text{N}(0,\sqrt{\lambda})\bigr)^2+\chi_{\nu-1}^2$,
so such a sample can be generated by a standard Normal sample
and a central chi-square sample. When $0<\nu<1$, such a sample
can be generated by sampling from a Poisson distribution with mean $\lambda/2$,
and then sampling from a central $\chi_{2N+\nu}^2$ distribution.

In the Heston model, 
to simulate the asset price Broadie and Kaya integrated the 
variance process $V$ to obtain an expression for $\int\sqrt{V_\tau}\,\mathrm{d}W_\tau$.
They substituted that expression into the stochastic differential equation 
for $\ln S_t$. The most difficult task left is then to simulate 
$\int V_\tau\,\mathrm{d}\tau$ on the global interval of integration conditioned
on the endpoint values of $V$; see Smith~\cite{Smith}. 
The Laplace transform of the transition density for this integral is
known from results in Pitman and Yor~\cite{PY}.
Broadie and Kaya used Fourier inversion techniques to sample from
this transition density. Glasserman and Kim~\cite{GK} on the other
hand, showed that linear combinations of series of particular 
gamma random variables exactly sample this density. They used
truncations of those series to generate suitably accurate 
sample approximations. 
Their method has proved to be highly effective in applications that do
not require the simulation of intermediate values of the process $S$,
for example when pricing derivatives that are not path-dependent.
Anderson~\cite{An} suggested two approximations
that make simulation of the Heston model very efficient, 
 and allow for pricing path-dependent options. 
The first was, after discretizing the 
time interval of integration for the price process, to approximate 
$\int V_\tau\,\mathrm{d}\tau$ on the integration subinterval by a
trapezoidal rule. 
This would thus require non-central $\chi_\nu^2(\lambda)$
samples for the volatility at each timestep. The second 
was to approximate and thus efficiently sample the $\chi_\nu^2(\lambda)$ 
distribution---in two different ways depending on the size of $\lambda$. 
Haastrecht and Pelsser~\cite{HP} have recently introduced a rival 
$\chi_\nu^2(\lambda)$ sampling method to Andersen's which
utilizes, for small $\lambda$, pre-caching tables for central chi-square
$\chi^2_{\nu+2N}$ distributions for small values of $N$, and for
large~$\lambda$, a matched squared normal random variable.

There are also numerous approximation methods based on the corresponding
Kolmogorov or Fokker--Planck partial differential equation. 
These can take the form of Fourier transform methods---see Carr and Madan~\cite{CM},
Kahl and J\"ackel~\cite{KJ2} or Fang and Oosterlee~\cite{FO1,FO2} 
for example---or some involve direct discretization of the 
Kolmogorov equation---see in 't Hout and Foulon~\cite{Karel} and Haentjens  
and in 't Hout~\cite{HHW_karel}.

We focus on the challenge of the attainable zero boundary case and in particular
on the case when $\nu\ll\,1$, typical of FX markets and long-dated
interest rate markets as remarked in Andersen~\cite{An}, and also
observed in credit risk, see  Brigo and Chourdakis~\cite{BC}.
(Using the additivity property of the chi-square distribution
$\chi^2_{\nu+k} = \chi^2_\nu +\chi^2_k$, the results can be
straightforwardly extended to all parameter regimes.)
The method we propose follows the lead of Andersen~\cite{An}, we
approximate the integrated variance process $\int V_\tau\,\mathrm{d}\tau$
by a trapezoidal rule. For the simulation of
the non-central $\chi_\nu^2(\lambda)$ transition density of the CIR process
that is used to model the variance process
required for each timestep of this integration method,
we suggest two new methods that rely on the following representation. 
A non-central $\chi_\nu^2(\lambda)$ random variable can be generated 
from a central $\chi_{2N+\nu}^2$ random variable with $N$ 
chosen from a Poisson distribution with mean $\lambda/2$.
Further, a $\chi_{2N+\nu}^2$ random variable can be generated 
from the sum of squares of $2N$ independent standard Normal random variables
(more efficiently sampled as the sum of the logarithm of $N$ uniform random variables)
and an independent central $\chi_{\nu}^2$ random variable. So the question
we now face is how can we efficiently simulate a central $\chi_{\nu}^2$ random variable,
especially for $\nu<1$? Suppose that $\nu$ is rational and expressed
in the form $\nu=p/q$ with $p$ and $q$ natural numbers. 
We show that a central $\chi_{\nu}^2$ random variable
can be generated from the sum of the $2q$th power of $p$  
independent random variables chosen from a generalized
Gaussian distribution $\text{N}(0,1,2q)$.

The question now becomes, how can we sample from a 
$\text{N}(0,1,2q)$ distribution? We have two answers.
The first lies in generalizing Marsaglia's polar method
for pairs of independent standard Normal random variables, which
we call the Marsaglia generalized Gaussian method (MAGG).
The second method generalizes  the Beasley--Springer--Moro direct inversion
method for standard Normal random variables to generate 
a high accuracy approximation
of the inverse generalized Gaussian distribution function.
We provide a thorough comparison, of our generalized Marsaglia polar
method and of our direct inversion method for 
sampling from the central $\chi_{\nu}^2$ distribution, to the 
acceptance-rejection methods of Ahrens--Dieter and 
Marsaglia--Tsang (see Ahrens and Dieter~\cite{AD:chi}, 
Glasserman~\cite{Glass} and Marsaglia and Tsang~\cite{MarTsa}).

The CIR process can thus be simulated by the two approaches just described;
exactly in the first instance and with very high accuracy in the second. 
The advantages of both  approaches are that for
the mean-reverting variance process in the Heston model, 
we can efficiently generate high quality samples simply and robustly.
The  methods require the degrees of freedom to be rational,
however this is fulfilled in practical applications: the parameter
$\nu$ will typically be obtained through calibration and can only be 
computed up to a pre-specified accuracy.
We demonstrate our two methods in the computation 
of option prices for parameter cases that are considered 
in Andersen~\cite{An} and Glasserman and Kim~\cite{GK} 
and described there as challenging and practically relevant. We also
demonstrate our methods for the pricing of path-dependent derivatives.

To summarize, we:
\begin{itemize}
\item Prove that a central chi-squared random variable with less than one degree
of freedom, can be written as a sum of powers of generalized Gaussian
random variables;
\item Prove a new method---the generalized Marsaglia polar method---for 
generating generalized Gaussian samples;
\item Provide a new and fast high-accuracy approximation 
(in principle to machine error) 
to the inverse generalized Gaussian distribution function;
\item Establish two new simple, flexible, high-accuracy,  efficient methods for
simulating the Cox--Ingersoll--Ross process,  for an attracting and 
attainable zero boundary, which we apply to simulating the Heston model.
\end{itemize}

Our paper is organised as follows. In Section~\ref{sec:ggsampling} we present
our new generalized Marsaglia method and our direct inversion method
for sampling from the generalized Gaussian distribution. 
In Section~\ref{sec:CChisq} we derive the representation of a 
chi-square distributed random variable as a sum of powers 
of independent generalized Gaussian random variables. 
We include a thorough comparison of the generalized Marsaglia method
and the direct inversion method for the central chi-squared distribution
(based on sampling from the generalized Gaussian distribution)
with the acceptance-rejection methods of Ahrens and Dieter~\cite{AD:chi} 
and of Marsaglia and Tsang~\cite{MarTsa}. We apply both our  methods to the 
CIR process and Heston model in Section~\ref{sec:simulations}. 
We compare their accuracy and efficiency to the leading 
approximation method of Andersen~\cite{An}.
Finally in Section~\ref{sec:conclu} we present some concluding remarks.

\section{Generalized Gaussian sampling}\label{sec:ggsampling}
We require an efficient method for generating generalized Gaussian samples.
Here we provide two such methods. The first method is a generalization of 
Marsaglia's polar method for standard Normal random variables.
This is an exact acceptance-rejection method. The second
method is a direct inversion method that generalizes the 
Beasley--Springer--Moro method for standard Normal random variables. 
In principle this method is accurate to machine error. 

\begin{definition}[Generalized Gaussian distribution]
A generalized $\mathrm{N}(0,1,q)$ random variable,
for $q\geqslant1$, has distribution function for $x\in\mathbb R$: 
\begin{equation*}
\Phi(x)\coloneqq\gamma_q
\int_{-\infty}^x\exp\bigl(-|\tau|^{q}/2\bigr)\,\mathrm{d}\tau,
\end{equation*}
where $\gamma_q\coloneqq q/\bigl(2^{1/q+1}\Gamma(1/q)\bigr)$ 
and $\Gamma(\cdot)$ is the standard gamma function.
\end{definition}
See Gupta and Song~\cite{GS}, Song and Gupta~\cite{SG}, Sinz,
Gerwinn and Bethge~\cite{SGB}, Sinz and Bethge~\cite{SB} 
and Pog\'any and Nadarajah~\cite{PN}
for more details on this distribution and its properties.

\subsection{Generalized Marsaglia polar method}
We generalize Marsaglia's polar method for pairs of 
independent standard Normal random variables
(see Marsaglia~\cite{Mar}).
\begin{theorem}[Generalized Marsaglia polar method]
Suppose for some $q\in\mathbb N$ that $U_1,\ldots,U_{q}$
are independent identically distributed 
uniform random variables over $[-1,1]$. 
Condition this sample set to satisfy the requirement $\|U\|_{q}<1$,
where $\|U\|_{q}$ is the $q$-norm of $U=(U_1,\ldots,U_{q})$.
Then the $q$ random variables generated by
$U\cdot (-2\log\|U\|_{q}^{q})^{1/q}/\|U\|_{q}$ are independent 
$\text{N}(0,1,q)$ distributed random variables. 
\end{theorem}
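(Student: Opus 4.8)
The plan is to run Marsaglia's original argument with the Euclidean norm replaced by the $q$-norm and the unit circle replaced by the $\ell^q$-unit sphere. Two observations drive everything. First, the joint density of $q$ independent $\mathrm{N}(0,1,q)$ variables is proportional to $\exp\bigl(-\tfrac12\|x\|_q^q\bigr)=\prod_{i=1}^q\exp\bigl(-\tfrac12|x_i|^q\bigr)$, so it depends on $x$ only through $\|x\|_q$ and it factorizes over coordinates. Second, the map $u\mapsto x:=u\,(-2\log\|u\|_q^q)^{1/q}/\|u\|_q$ preserves $q$-directions: writing $\rho:=\|u\|_q$ and $\omega:=u/\rho$, it sends $u=\rho\omega$ to $x=R\omega$ with $R:=(-2\log\rho^q)^{1/q}$, and consequently $\|x\|_q=R$. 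Thus the problem decouples into a radial statement and an angular statement.

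First I would record the generalized polar decomposition: letting $S_q:=\{\omega\in\mathbb R^q:\|\omega\|_q=1\}$ and $\sigma$ the measure on $S_q$ determined by $\mathrm{d}u=\rho^{q-1}\,\mathrm{d}\rho\,\mathrm{d}\sigma(\omega)$ under $u=\rho\omega$ (which exists by the $1$-homogeneity of $\|\cdot\|_q$). Applying this to the conditioned sample $U$, whose law is uniform on the $q$-ball $B_q:=\{\|u\|_q<1\}$, the density $\mathbf 1_{B_q}(u)/\mathrm{vol}(B_q)$ becomes, in the coordinates $(\rho,\omega)$, a function of $\rho$ times a function of $\omega$. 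Hence $\rho$ and $\omega$ are independent, $\omega$ is distributed according to $\sigma/\sigma(S_q)$, and $\rho$ has density $q\rho^{q-1}$ on $(0,1)$, i.e.\ $\rho^q\sim\mathrm{Unif}(0,1)$.

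Next I would transfer this through the map. Since $\rho^q$ is uniform on $(0,1)$, the variable $R^q=-2\log\rho^q$ is exponential with mean $2$, so $R$ has density $\tfrac{q}{2}\,r^{q-1}\exp\bigl(-\tfrac12 r^q\bigr)$ on $(0,\infty)$; as $R$ is a function of $\rho$ alone, it stays independent of $\omega$. Now $X=(X_1,\dots,X_q)=R\omega$, and applying the polar decomposition once more to compute the Lebesgue density $g$ of $X$ from the product law of $(R,\omega)$ gives, at $x=r\omega$, the identity $g(x)\,r^{q-1}=\tfrac{q}{2\sigma(S_q)}\,r^{q-1}\exp\bigl(-\tfrac12 r^q\bigr)$, whence $g(x)=\tfrac{q}{2\sigma(S_q)}\exp\bigl(-\tfrac12\|x\|_q^q\bigr)=\tfrac{q}{2\sigma(S_q)}\prod_{i=1}^q\exp\bigl(-\tfrac12|x_i|^q\bigr)$. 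This is a product of identical one-dimensional factors, which already forces the $X_i$ to be independent and identically distributed; and since $g$ integrates to $1$, the constant is pinned down and each factor must be exactly the $\mathrm{N}(0,1,q)$ density (equivalently, this evaluates $\sigma(S_q)$ in terms of $\Gamma(1/q)$).

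I expect the only genuine work to be in making the generalized polar/coarea change of variables and the cone measure $\sigma$ fully rigorous---in particular justifying the step ``a density that is a function of $\rho$ times a function of $\omega$ implies independence of $\rho$ and $\omega$'' and the Jacobian bookkeeping on $B_q\setminus\{0\}$ (removing a Lebesgue-null set). Everything else is the verbatim $q$-fold generalization of the $q=2$ Box--Muller/Marsaglia computation, and no new estimate is required.
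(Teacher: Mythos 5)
Your argument is correct, but it takes a genuinely different route from the paper's. The paper works directly in Cartesian coordinates: it writes down the inverse map $G^{-1}\colon w\mapsto \exp(-z^q/2q)\,w/z$ with $z=\|w\|_q$, observes that $\mathrm{D}G^{-1}$ is a rank-one perturbation of a multiple of the identity, evaluates its determinant by the matrix determinant lemma to get $-\tfrac12\exp(-z^q/2)$, and reads off the target density from the change-of-variables formula, using the known value of $\mathrm{vol}\bigl(\mathbb S_q(1)\bigr)$ as an input. You instead factor the problem through a generalized polar decomposition $\mathrm{d}u=\rho^{q-1}\,\mathrm{d}\rho\,\mathrm{d}\sigma(\omega)$: conditioned on acceptance, $\rho^q=\|U\|_q^q$ is uniform on $(0,1)$ and independent of the direction $\omega$, so $R^q=-2\log\rho^q$ is exponential with mean $2$, and reassembling $X=R\omega$ gives a density proportional to $\exp\bigl(-\tfrac12\|x\|_q^q\bigr)$, which factorizes over coordinates. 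This buys conceptual transparency---it shows that $(-2\log\|U\|_q^q)^{1/q}$ is precisely the inverse-CDF transform of the radial part, it avoids the $q\times q$ Jacobian computation, and the normalization step recovers $\mathrm{vol}\bigl(\mathbb S_q(1)\bigr)=2^q\bigl(\Gamma(1/q)\bigr)^q/q^q$ as a byproduct rather than requiring it as an input. The price is the technical overhead you correctly flag: you must construct $\sigma$ and justify both polar steps. One caution there: for $q\neq2$ the cone measure on $\{\|\omega\|_q=1\}$ is \emph{not} the surface (Hausdorff) measure, so do not invoke the coarea formula with the Euclidean surface element; instead define $\sigma(A)\coloneqq q\cdot\mathrm{vol}\{r\omega:\omega\in A,\ 0<r<1\}$, for which the decomposition follows from the homogeneity of Lebesgue measure and both of your polar steps become elementary. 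With that definition in place your proof is complete and equivalent in conclusion to the paper's.
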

\begin{proof}
Suppose for some $q\in\mathbb N$ that $U=(U_1,\ldots,U_{q})$ 
are independent identically distributed 
uniform random variables over $[-1,1]$, conditioned on the requirement
that $\|U\|_{q}<1$. Then the scalar variable
$Z\coloneqq(-2\log\|U\|_q^q)^{1/q}>0$
is well defined. Let $f$ denote the probability density
function of $U$ given $\|U\|_{q}<1$; it is defined on the interior of 
the $q$-sphere, $\mathbb S_q(1)$, whose bounding surface is $\|U\|_{q}=1$.
We define a new set of $q$ random variables $W=(W_1,\ldots,W_q)$ 
by the map $G\colon\mathbb S_q(1)\to\mathbb R^p$ where $G\colon U\mapsto W$
is given by $G\circ U=(Z/\|U\|_q)\cdot U$.
Note that the inverse map $G^{-1}\colon\mathbb R^p\to\mathbb S_q(1)$ is
well defined and given by $G^{-1}\circ W=Z^{-1}\cdot\exp(-Z^q/2q)\cdot W$,
where we note that in fact $Z=\|W\|_q$ which comes from taking
the $q$-norm on each side of the relation $W=G(U)$.
We wish to determine the probability density function of $W$. Note that
if $\Omega\subset\mathbb R^q$, then we have
$\mathbb P\,\bigl(W\in\Omega\bigr)
=\mathbb P\,\bigl(U\in G^{-1}(\Omega)\bigr)
=\int_{G^{-1}(\Omega)}f\circ u\,\mathrm{d}u
=\int_{\Omega}(f\circ G^{-1}\circ w)\cdot\bigl|\det(\mathrm{D}G^{-1}\circ w)\bigr|\,\mathrm{d}w$,
where for $w=(w_1,\ldots,w_p)\in\Omega$, the quantity $\mathrm{D}G^{-1}\circ w$
denotes the Jacobian transformation matrix of $G^{-1}$. Hence the 
probability density function of $W$ is given by 
$(f\circ G^{-1}\circ w)\cdot\bigl|\det(\mathrm{D}G^{-1}\circ w)\bigr|$.
The Jacobian matrix and its determinant are established by direct
computation. For each $i,k=1,\ldots,q$ we see that if we define
$g(z)\coloneqq-(1/2+1/z^q)$, then 
\begin{equation*}
\partial G^{-1}_k/\partial w_i
=z^{-1}\exp(-z^q/2q)\cdot\Bigl(\delta_{ik}
+g(z)\cdot\bigl(\mathrm{sgn}(w_i)\cdot|w_i|^{q-1}\bigr)\cdot w_k\Bigr),
\end{equation*}
where $\delta_{ik}$ is the Kronecker delta function. If we set
$v=\bigl(\mathrm{sgn}(w_1)\cdot|w_1|^{q-1},\ldots,\mathrm{sgn}(w_q)\cdot|w_q|^{q-1}\bigr)^{\mathrm{T}}$
then we see that our last expression generates the following relation for
the Jacobian matrix: 
$z\exp(z^q/2q)\cdot\bigl(\mathrm{D}G^{-1}\circ w\bigr)
=I_q+g(z)\cdot v\,w^{\mathrm{T}}$,
where $I_q$ denotes the $q\times q$ identity matrix.
From the determinant rule for rank-one updates---see 
Meyer~\cite[p.~475]{Meyer}---we see that the determinant of the 
Jacobian matrix is given by
\begin{align*}
\det(\mathrm{D}G^{-1}\circ w\bigr)
&=z^{-q}\exp(-z^q/2)\cdot\bigl(1+g(z)\,w^{\mathrm{T}}\,v\bigr)\\
&=z^{-q}\exp(-z^q/2)\cdot\bigl(1+g(z)\,z^q\bigr)\\
&=-\tfrac12\exp(-z^q/2).
\end{align*}
Noting that
$\mathrm{vol}\bigl(\mathbb S_q(1)\bigr)=2^q\cdot\bigl(\Gamma(1/q)\bigr)^q/q^q$ 
we have
\begin{equation*}
(f\circ G^{-1}\circ w)\cdot\bigl|\det(\mathrm{D}G^{-1}\circ w)\bigr|
=\frac{q^q}{2^{q+1}\bigl(\Gamma(1/q)\bigr)^q}\cdot\exp(-z^q/2).
\end{equation*}
This is the joint probability density function for $q$ independent
identically distributed $q$-generalized Gaussian random variables,
establishing the required result. 
\qed
\end{proof}
\begin{remark} 
The corresponding generalization of the  Box--Muller method 
involves the beta distribution, which does not appear to be 
a convenient approach; see Liang and Ng~\cite{LNg}, 
Harman and Lacko~\cite{HL} and Lacko and Harman~\cite{LH}.
\end{remark}

\subsection{Direct inversion method}\label{subsec:di}
We generalize the Beasley--Springer--Moro direct inversion 
method for standard Normal random variables to the 
generalized Gaussian $\text{N}(0,1,q)$ for $q>2$; 
see Moro~\cite{Moro} or Joy, Boyle and Tan~\cite{JBT} for
the case $q=2$. 
We focus on the case of large $q$; the reasons for this 
will become apparent in Section~\ref{sec:CChisq}. In the 
limit of large $q$ the probability density function
for the generalized Gaussian attains the profile of
the density of a $\text{U}(-1,1)$ uniform random variable. 
For large but finite $q$ the probability density function
of the generalized Gaussian resembles a smoothed version
of the $\text{U}(-1,1)$ density profile. It naturally exhibits 
three distinct behavioural regimes which are also naturally
reflected in the generalized Gaussian distribution function
$\Phi$, as well as its inverse $\Phi^{-1}$ which is our
principal object of interest.
We use the symmetry of the density function
to focus on the positive half $[0,\infty)$ of its support. 
Correspondingly the inverse distribution
function $\Phi^{-1}$ is anti-symmetric about $1/2$;
and we can focus on the subinterval $[1/2,1)$ of
its support. Since $\Phi$ is monotonic (and bijective)
we naturally identify (and pairwise) label 
the three behavioural regimes of the density function
and inverse distribution function as follows. We set
$x_*\coloneqq \bigl(2(1-1/q)\bigr)^{1/q}$ and 
\begin{equation*}
x_\pm\coloneqq \Bigl(3(1-1/q)\pm\bigl((5-1/q)(1-1/q)\bigr)^{1/2}\Bigr)^{1/q}
\end{equation*}
and correspondingly $\Phi_*\coloneqq \Phi(x_*)$ and $\Phi_\pm\coloneqq \Phi(x_\pm)$.
Here $x_*\in[0,\infty)$ denotes the inflection point of
the density profile, i.e. $\Phi'''(x_*)=0$, while 
$x_\pm\in[0,\infty)$ are the points where $\Phi''''(x_\pm)=0$.
Then we identify the:
\begin{enumerate}
\item \emph{Central region} where $x\in[0,x_-]$ or
$\Phi\in[1/2,\Phi_-]$---roughly corresponding to the 
region where the density profile is flat and approximately
equal to $1/2$;
\item \emph{Middle region} where $x\in[x_-,x_+]$ or
$\Phi\in[\Phi_-,\Phi_+]$---roughly corresponding to the 
region where the density profile has a large negative slope; and
\item \emph{Tail region} where $x\in[x_+,\infty)$ or
$\Phi\in[\Phi_+,1)$---roughly corresponding to the 
region where the density profile is flat and approximately
equal to zero.
\end{enumerate}
%
%
As in Beasley and Springer~\cite{BS}, in the \emph{central region} we 
approximate the inverse generalized Gaussian $\Phi^{-1}=\Phi^{-1}(u)$ 
with $u\in[1/2,\Phi_-]$ by an $(m,n)$ Pad\'e approximant
\begin{equation*}
\Phi^{-1}(u)\approx U
\cdot\frac{a_0+a_1(U^q)+\cdots+a_m(U^q)^m}{1+b_1(U^q)+\cdots+b_n(U^q)^n}, 
\end{equation*}
where $U=(u-1/2)/\gamma_q$, with $\gamma_q$
the reciprocal of the normalizing factor of the generalized
Gaussian distribution, and $a_0,\ldots,a_m,b_1,\ldots,b_n$ are
constant coefficients.
Typically across a large range of values of $q$
the choice of values for $m$ and $n$ equal to $3,4$ or $5$
generate approximations with order
of $10^{-10}$ accuracy. The coefficients $a_0,\ldots,a_m$ 
and $b_1,\ldots,b_n$ change as $q$ varies---see the discussion below. 

Motivated by the approximations suggested in  Blair, Edwards 
and Johnson~\cite{BEJ}, in the \emph{middle region} we approximate $\Phi^{-1}$
with $u\in[\Phi_-,\Phi_+]$ by a rational $(m,n)$ Pad\'e approximation 
of a scaled and shifted variable as follows: 
\begin{equation*}
\Phi^{-1}(u)\approx
\frac{c_0+c_1(\eta-\eta_*)+\cdots+c_m(\eta-\eta_*)^m}
{1+d_1(\eta-\eta_*)+\cdots+d_n(\eta-\eta_*)^n}, 
\end{equation*}
where $\eta\coloneqq-\log(1-u)$, $\eta_*\coloneqq-\log(1-\Phi_*)$
and $c_0,\ldots,c_m,d_1,\ldots,d_n$ are constant coefficients.
Note that the integers $m$ and $n$ here are distinct from those in
the central approximation above.
Again typically as $q$ varies, values of $m$ and $n$ equal to $3,4$ or $5$
generate approximations with order of $10^{-10}$ accuracy.

Now using the ansatz of Moro~\cite{Moro}, 
in the \emph{tail region} we approximate $\Phi^{-1}$
with $u\in[\Phi_+,1)$ by a degree $n$ Chebychev polynomial
approximation---suggested in Joy, Boyle and Tan~\cite{JBT}---of 
a scaled and shifted variable as follows: 
\begin{equation*}
\Phi^{-1}(u)\approx\hat c_0T_0(z)+\hat c_1T_1(z)+\cdots+\hat c_nT_n(z)-\hat c_0/2
\end{equation*}
where $T_n$ is the degree $n$ Chebychev polynomial 
in $z\in[-1,1]$, where $z\coloneqq k_1\xi+k_2$ and 
$\xi\coloneqq\log\bigl(-\log\bigl((1-u)/C_q\bigr)\bigr)$,
with $C_q\coloneqq1/\bigl(2\Gamma(1/q)\bigr)$.
The parameters $k_1$ and $k_2$ are chosen so that $z=-1$ when
$u=\Phi_+$ and $z=1$ when $u=1-10^{-12}$.
Then as $q$ varies, values of $n$ equal to $10$ 
generate approximants with order $10^{-10}$ accuracy.
As we discuss below, when we evaluate the Chebychev polynomials 
using double precision arithmetic, we need to restrict the tail approximation 
to $u\in[\Phi_+,1-10^{-8}]$.
\begin{remark} 
The choices of the scaled variables in the central and tail
region approximations above are motivated by the asymptotic
approximation for $\Phi^{-1}=\Phi^{-1}(u)$ as $u\to1^-$. After
applying the logarithm to the large $x$ asymptotic expansion 
for the generalized Gaussian distribution function $\Phi=\Phi(x)$,
we get for $y\coloneqq x^q/2$:
\begin{equation*}
y=-\log\biggl(\frac{1-\Phi(x)}{C_q}\biggr)+(1/q-1)\log y-y
+\log\biggl(1+\sum_{n\geqslant1}\frac{(1/q-1)\cdots(1/q-n)}{y^{n}}\biggr). 
\end{equation*}
We can generate an asymptotic expansion for $y$ and thus $\Phi^{-1}=(2y)^{1/q}$
by iteratively solving the above equation with initial guess given by 
the first term on the right shown above. The expansion is
$y\sim\mathrm{e}^\xi+(1/q-1)\xi+\mathrm{e}^{-\xi}P_1(\xi)+\mathrm{e}^{-2\xi}P_2(\xi)+\cdots$,
where the $P_n$ for $n\geqslant1$ are explicitly determinable degree $n$ polynomials.
\end{remark}
For some fixed values for $q$, we quote in Appendix~\ref{app:coeffs}
values for the coefficients for the approximations above in the 
three regions. We obtained the coefficients in the case of the two
Pad\'e rational approximants by applying the least squares approach 
advocated half way down page~205 in Press \emph{et al.}~\cite{PTVF}.
This requires values for $\Phi^{-1}$ at, for example, nodal points 
roughly distributed as zeros of a high degree Chebychev polynomial. 
These were obtained by high precision Gauss-Konrod quadrature 
approximation of $\Phi$ combined with a high precision root finding 
algorithm.
In the case of the Chebychev approximations in the tail region,
we computed the coefficients in the standard way, see for example
Section~5.8 in Press \emph{et al.}~\cite{PTVF}. Further Chebychev 
approximations can be efficiently evaluated using Clenshaw's
recurrence formula found on page~193 of Press \emph{et al.}.
Thus, with the $a$, $b$, $c$, $d$ and $\hat c$  coefficients 
above computed, our direct inversion algorithm is given in 
Appendix~\ref{app:di}.

Figure~\ref{fig:iggderr} shows the error in the inverse
generalized Gaussian approximations for $q=10,100,1000$.
The top three panels show that in all three cases, and
across the central, middle and tail regions, the coefficients
listed in Appendix~\ref{app:coeffs} generate approximations
with errors of order $10^{-10}$. This is comparable with
the error in the Beasley--Springer--Moro approximation
when $q=2$.  We note the tail region we have
considered extends to $1-10^{-8}$, whereas the tail region of
Beasley--Springer--Moro extends to $1-10^{-12}$. The reason for this
lies in the arithmetic precision we used. In fact we evaluated the 
coefficients for the Chebychev approximations in the tail regions using 
$25$ digit arithmetic, but evaluated the corresponding 
Chebychev polynomials in double precision arithmetic.
Indeed in the top three panels in Figure~\ref{fig:iggderr},
we can see the effect in tail regions of restricting 
calculations to double precision (16 digit) arithmetic. 
If we also evaluate the Chebychev polynomials in $25$ digit
arithmetic then errors of our Chebychev polynomial approximations
in the tail region are shown in the lower two panels in 
Figure~\ref{fig:iggderr}. We observe there that our tail
approximations in fact maintain $10^{-10}$ or better 
as far as $1-10^{-12}$ on the abscissa. However unless
stated otherwise, all our subsequent calculations
are performed using double precision arithmetic.

\begin{figure} 
  \begin{center}
  \includegraphics[width=10.0cm,height=3.3cm]{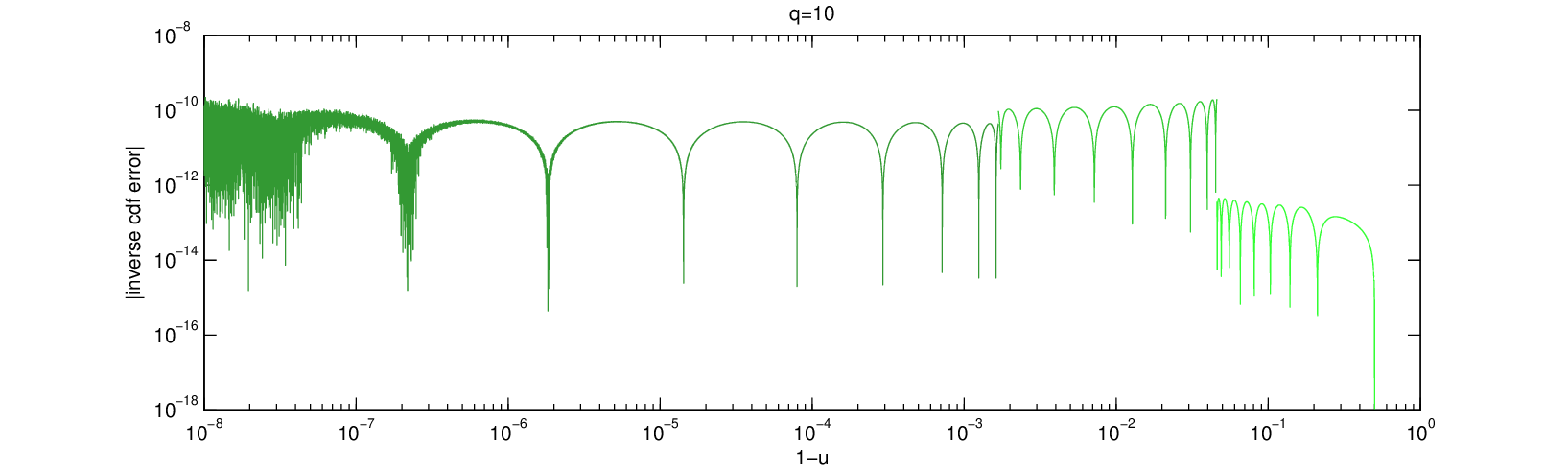}
  \includegraphics[width=10.0cm,height=3.3cm]{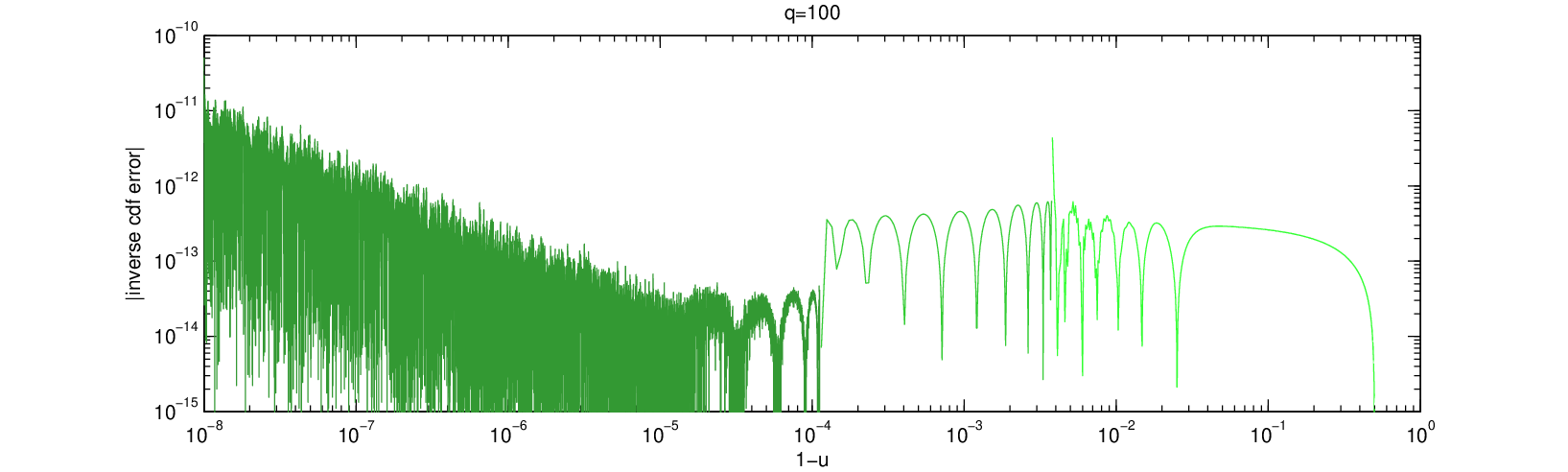} 
  \includegraphics[width=10.0cm,height=3.3cm]{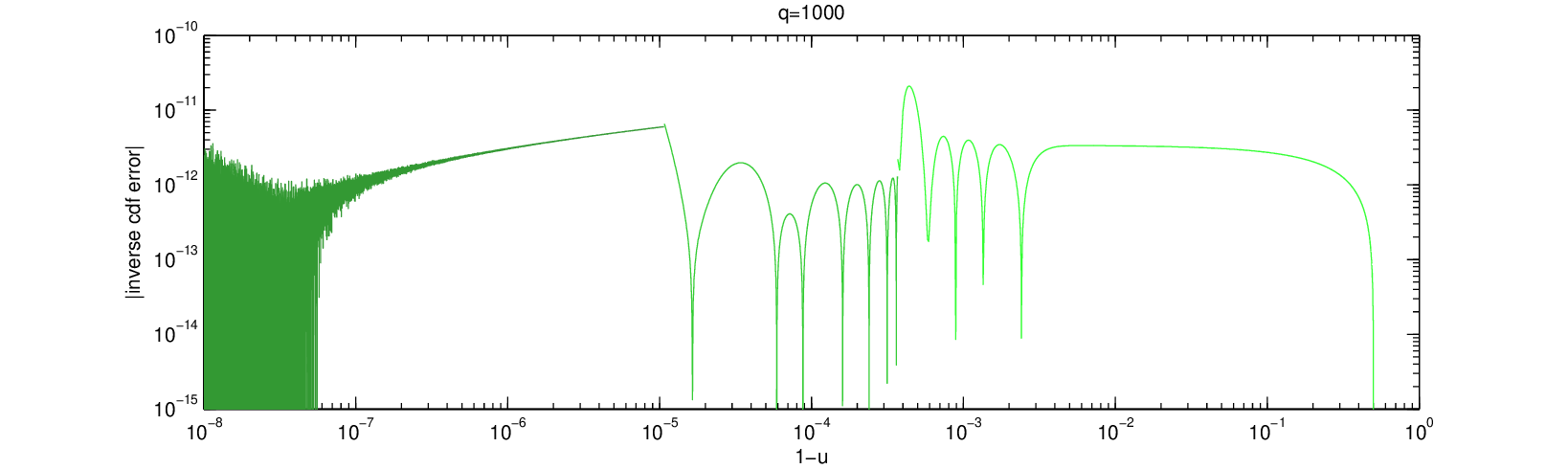}
  \includegraphics[width=10.0cm,height=3.3cm]{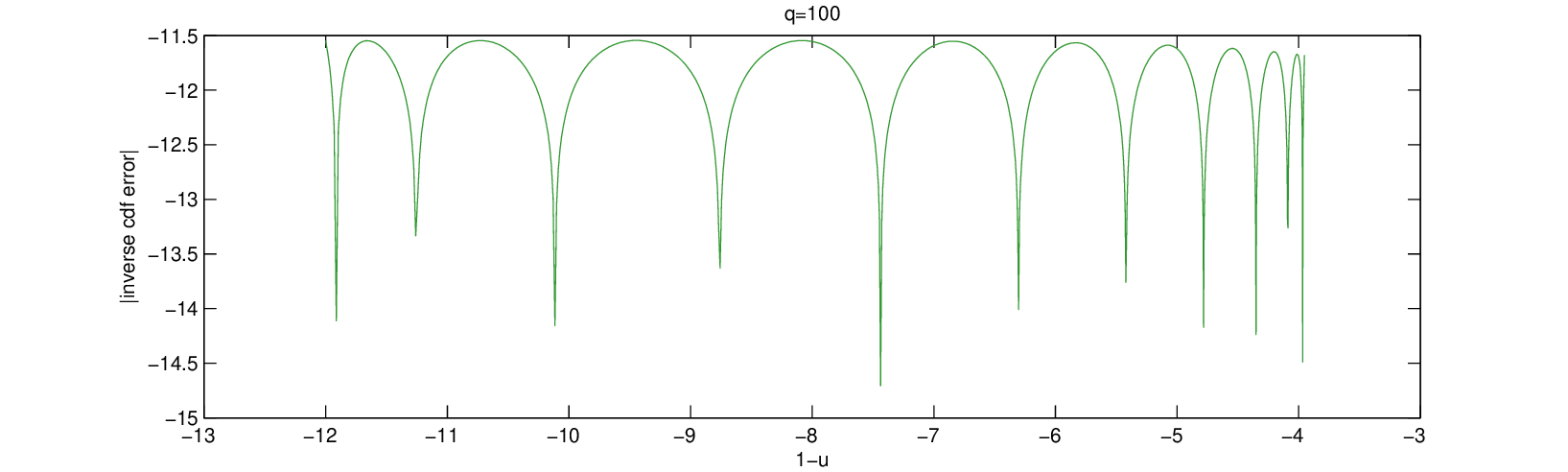} 
  \includegraphics[width=10.0cm,height=3.3cm]{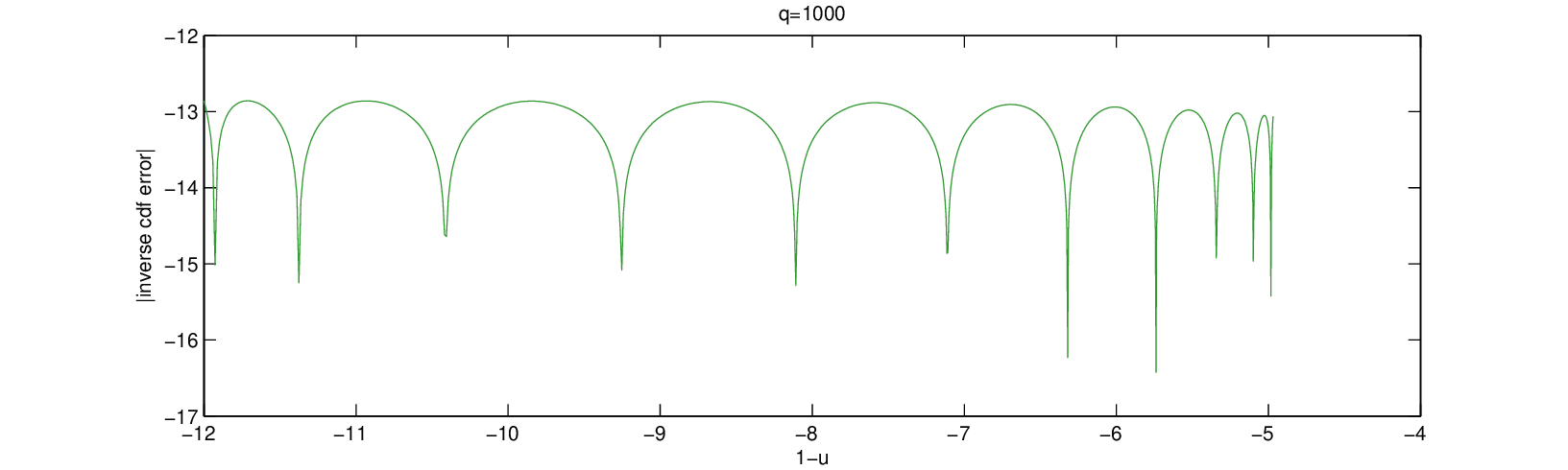}
  \end{center}
\caption{The top three panels show the error in our inverse
generalized Gaussian distribution approximations, respectively 
for $q=10,100,1000$, computed using double precision arithmetic. 
Each panel shows the central, middle
and tail approximants for the corresponding values of $q$.
Note the error in all cases is of order $10^{-10}$. 
The lower two panels show the error in our tail
approximations, respectively for $q=100,1000$, computed
using 25 digit arithmetic.}
\label{fig:iggderr}
\end{figure}


\section{Chi-square sampling}\label{sec:CChisq}
We begin by proving that random variables with a central $\chi_{\nu}^2$ distribution, 
especially for $\nu<1$, can be represented by random variables with a generalized
Gaussian distribution. 
\begin{theorem}[Central chi-square from generalized Gaussians]
Suppose $X_i\sim\mathrm{N}(0,1,2q)$ are independent identically 
distributed random variables for $i=1,\ldots,p$, where $q\geqslant1$
and $p\in\mathbb N$. Then we have 
\begin{equation*}
\sum_{i=1}^p\bigl|X_i\bigr|^{2q}\sim\chi_{p/q}^2.
\end{equation*}
\end{theorem}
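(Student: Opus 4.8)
The plan is to reduce the statement to two classical facts: the change-of-variables formula for densities under the map $x\mapsto|x|^{2q}$, and the stability of the chi-square (equivalently gamma) family under convolution. Throughout, recall that for a non-integer index the law $\chi^2_k$ is to be understood as the gamma law with shape $k/2$ and scale $2$, i.e.\ with density $\frac{1}{2^{k/2}\Gamma(k/2)}\,y^{k/2-1}e^{-y/2}$ on $(0,\infty)$; this is the meaning in which $\chi^2_{p/q}$ appears in the statement.

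First I would identify the distribution of a single term $Y_i\coloneqq|X_i|^{2q}$. Since the generalized Gaussian density is even, for $y>0$ we have $\mathbb P(Y_i\le y)=2\int_0^{y^{1/(2q)}}f_{\mathrm N(0,1,2q)}(x)\,\mathrm dx$; differentiating and using $\bigl|y^{1/(2q)}\bigr|^{2q}=y$ gives
\begin{equation*}
f_{Y_i}(y)=2\,f_{\mathrm N(0,1,2q)}\bigl(y^{1/(2q)}\bigr)\cdot\frac{1}{2q}\,y^{1/(2q)-1}
=\frac{1}{2^{1/(2q)}\,\Gamma\!\bigl(1/(2q)\bigr)}\,y^{1/(2q)-1}\,e^{-y/2},
\end{equation*}
where the factor $2q$ in the normalising constant of $f_{\mathrm N(0,1,2q)}$ cancels the Jacobian factor $1/(2q)$ and the folding factor $2$ removes one power of $2$ from the denominator. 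The right-hand side is precisely the $\chi^2_{1/q}$ density (shape $1/(2q)$, scale $2$), so $Y_i\sim\chi^2_{1/q}$.

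It then remains to show that $\sum_{i=1}^p Y_i\sim\chi^2_{p/q}$, given that the $Y_i$ are independent, each $\chi^2_{1/q}$. This is the additivity of the gamma family with common scale: the moment generating function of $\chi^2_{1/q}$ is $(1-2t)^{-1/(2q)}$ for $t<1/2$, so by independence the sum has moment generating function $(1-2t)^{-p/(2q)}$, which is that of $\chi^2_{p/q}$; alternatively one convolves the densities directly, or simply cites the gamma convolution rule. This yields the claim.

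I do not anticipate a substantive obstacle: the argument is a short computation once the right objects are in place. The only points that need care are the reading of $\chi^2_\nu$ as a gamma law for non-integer $\nu$, the bookkeeping of the normalising constants through the change of variables, and noting that the hypotheses $q\ge1$ and $p\in\mathbb N$ guarantee everything in sight is well defined (in fact the computation above is valid for any $q>0$ and any real $p>0$).
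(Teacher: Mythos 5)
Your proposal is correct and follows essentially the same route as the paper: identify the law of a single term $|X_i|^{2q}$ as $\chi^2_{1/q}$ via the change of variables $\xi=|\tau|^{2q}$ (you do it at the level of densities, the paper at the level of the distribution function), and then invoke additivity of independent chi-square (gamma) variables with common scale. The only cosmetic difference is that you justify the additivity step by moment generating functions while the paper simply cites it.
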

\begin{proof}
If $X\sim\text{N}(0,1,2q)$, then we have
$\mathbb P\bigl(|X|^{2q}<x\bigr)=2\cdot\mathbb P\bigl(0<X<|x|^{1/2q}\bigr)$
and a simple $2q$th power law transformation reveals that
$|X|^{2q}\sim\chi_{1/q}^2$.
Using that the sum of $p$ independent identically distributed 
$\chi_{1/q}^2$ random variables have a $\chi_{p/q}^2$ distribution
establishes the result.\qed
\end{proof}

\subsection{Generalized Marsaglia approach}\label{subsec:viaggd}
We restrict ourselves to the case when the number of degrees of freedom is 
rational, i.e.\/ $\nu=p/q$ with $p,q\in\mathbb N$. 
The algorithm for generating central $\chi_\nu^2$
samples is as follows.
\begin{algorithm}[Exact central chi-square samples]\label{alg:centralchisquare}
\begin{enumerate}
\item Generate $2q$ independent uniform random variables over $[-1,1]$:
$U=(U_1,\ldots,U_{2q})$.
\item If $\|U\|_{2q}<1$ continue, otherwise repeat Step~1. 
\item Compute $Z=U\cdot (-2\log\|U\|_{2q}^{2q})^{1/2q}/\|U\|_{2q}$. This gives
$2q$ independent $\text{N}(0,1,2q)$ distributed random variables $Z=(Z_1,\ldots,Z_{2q})$. 
\item Compute $Z_1^{2q}+\cdots+Z_{p}^{2q}\sim\chi_{p/q}^2$.
\end{enumerate}
\end{algorithm}
\begin{remark}
Note that if $p<2q$ then we can use the remaining $\text{N}(0,1,2q)$
random variables we generate in Step~3 the next time we need to generate
a $\chi_{p/q}^2(\lambda)$ sample. In practice we don't really need to
consider the case $p\geqslant2q$, but for the sake of completeness, we would
simply generate $p-2q$ more $\text{N}(0,1,2q)$ samples by repeating Steps~1--3.
\end{remark}
In Step~2, the probability of accepting $U_1,\ldots,U_{2q}$ is
given by the ratio of the volumes of $\mathbb S_{2q}(1)$ and $[-1,1]^{2q}$:
$P_{\mathrm{Mar}}\coloneqq\bigl(\Gamma(1/2q)/2q\bigr)^{2q}$.
Note for $q=1$, the probability of acceptance is 0.7854. Further as
$q\to\infty$ we have $P_{\mathrm{Mar}}\to\exp(-\gamma)\approx 0.5615$.
Here $\gamma$ is the Euler--Mascheroni constant and 
$\Gamma(z)\sim 1/z-\gamma$ as $z\to0^+$.

In practice we will need to generate a large number of samples. 
For the generalized Marsaglia polar method,
in each accepted attempt, we generate $2q$ generalized Gaussian 
random variables. Of these, $p$ random variables are used to generate 
a $\chi_{p/q}^2$ random variable. The number of attempts until the first 
success has a geometric distribution with mean $1/P_{\mathrm{Mar}}$. 
Hence the expected number of steps to generate $2q/p$ 
independent $\chi_{p/q}^2$ random variables is thus $1/P_{\mathrm{Mar}}$.

How does the acceptance rate of our central chi-squared
sampling method based on the generalized Marsaglia polar method,
for the case $\nu<2$, compare to the two leading acceptance-rejection methods?
These are the methods of Ahrens and Dieter~\cite{AD:chi} (also see
Glasserman~\cite[pp.~126--7]{Glass}) and the method of 
Marsaglia and Tsang~\cite{MarTsa}.

The acceptance-rejection algorithm of Ahrens--Dieter
is based on a mixture of the
prior densities $(\nu/2)\,x^{\nu/2-1}$ on $[0,1]$ and $\exp(1-x)$ on
$(1,\infty)$,
with weights $\mathrm{e}/(\mathrm{e}+\nu/2)$ and
$(\nu/2)/(\mathrm{e}+\nu/2)$,
respectively; here $\mathrm{e}=\exp(1)$. This method generates one
$\chi_\nu^2$
random variable with probability of acceptance
$P_{\mathrm{AD}}\coloneqq
(\nu/2)\,\Gamma(\nu/2)\,\mathrm{e}/(\nu/2+\mathrm{e})$.
In this method, the number of degrees of freedom $\nu$ can
be any real number. The expected number of attempts to generate $2q/p$
independent $\chi_{p/q}^2$ distributed
random variables is thus $(2q/p)\cdot (1/P_{\mathrm{AD}})$.
How do the expected number of attempts compare?
In other words, to generate $2q/p$ random variables, is
$1/P_{\mathrm{Mar}}\leqslant (2q/p)\cdot (1/P_{\mathrm{AD}})$?
Or equivalently, when does
$p/q\leqslant 2 P_{\mathrm{Mar}}/P_{\mathrm{AD}}$ hold? We
examine the right-hand side more carefully; set $z\coloneqq1/2q$, so
$0<z<1/2$. Then we have
\begin{equation*}
\frac{P_{\mathrm{Mar}}}{P_{\mathrm{AD}}}=
\bigl(z\,\Gamma(z)\bigr)^{1/z}\cdot\frac{\nu/2+\mathrm{e}}{(\nu/2)\,\Gamma(\nu/2)\,\mathrm{e}}.
\end{equation*}
Note $z$ and $\nu/2$ are independent. A lower bound for
$\bigl(z\,\Gamma(z)\bigr)^{1/z}$ is $\exp(-\gamma)\approx 0.5615$ for
$0<z<1/2$,
whilst a lower bound for
$(\nu/2+\mathrm{e})/((\nu/2)\,\Gamma(\nu/2)\,\mathrm{e})$
is $1$ for $0<\nu<2$. Hence $2 P_{\mathrm{Mar}}/P_{\mathrm{AD}}>1$ and
so for $p/q<1$, the expected number of attempts for the
generalized Marsaglia method is less than that for
the Ahrens--Dieter method.
We further note that the expected number of attempts for the generalized
Marsaglia method
to generate $2q/p$ chi-square samples is bounded by its value for $q=1$ and
the limit as $q\to\infty$, more precisely the expected number of attempts
is $1/P_{\mathrm{Mar}}\in (1.2732,\, 1.7811)$.
In contrast in the Ahrens--Dieter method, the expected number of
attempts to generate $2q/p$ samples is $(2q/p) (1/P_{\mathrm{AD}})$,
which is unbounded.

The expected number of steps is relevant in the context
of how often an algorithm is called. The generalized Marsaglia method
uses $2q$ uniform random variables each time it is called,
  while the algorithm of Ahrens--Dieter requires only 2
random variables. The expected number of random variables required
to generate $2q/p$ chi-square samples is thus $2q/P_{\mathrm{Mar}}$  and
$4q/p \cdot	1/P_{\mathrm{AD}}$, respectively. Since
\begin{equation*}
\frac{4q/p \cdot 1/P_{\mathrm{AD}}}{2q \cdot 1/P_{\mathrm{Mar}}}
= \frac{2}{p}\cdot\frac{P_{\mathrm{Mar}}}{P_{\mathrm{AD}}}
\end{equation*}
and since $P_{\mathrm{Mar}}<P_{\mathrm{AD}}$, we see that the expected required
number of input random variables is smaller for the
generalised Marsaglia method 
for $p=1$ only, that is when the degrees of freedom can be written in the
form $p/q=1/2q$. 

Secondly, we compare the generalized Marsaglia method 
with the acceptance--rejection method of
Marsaglia and Tsang~\cite{MarTsa}. 
Their method is based on taking  a transformation
$h(X)= d(1+cX)^3$ on the set $\{X>-1/c\}$, where the distribution 
of $X$ is such that $h(X)$ has the required gamma distribution. 
Here $d=p/2q -1/3$ and $c=1/\sqrt{9d}$. 
The random variable $X$ can be sampled using an
acceptance--rejection method based on sampling a Normal random
variable. The acceptance probability is 
$P_{\mathrm{MT}}=\int_{-1/c}^\infty \mathrm{e}^{g(x)}\,\mathrm{d}x\,\cdot 1/\sqrt{2\pi}\cdot \bigl(1-\Phi(-1/c)\bigr)$, 
where
\begin{equation*}
g(x)= d \ln\bigl((1+cx)^3\bigr) -d (1+cx)^3+d
\end{equation*}
and where $\Phi$ denotes the standard Normal distribution
function. The algorithm of  Marsaglia--Tsang assumes that the gamma
parameter $p/2q\ge 1$. As noted there, a gamma random variable 
$\gamma(\alpha)$ with gamma parameter $\alpha <1$  can be generated by
$\gamma(\alpha)=\gamma(1+\alpha) U^{1/\alpha}$, where
$U\sim \text{U}(0,1) $. 
The acceptance probability $P_{\mathrm{MT}}$ can be numerically
evaluated, for example, for $p/2q=1$ its value is 
$0.95167\cdot0.992847=0.944864$, while for $p/2q=2$ 
its value is $0.98166\cdot0.999946=0.98161$.
By analogy to the comparison with the algorithm of
Ahrens--Dieter above, we see that the expected number of steps to
generate $2q/p$ independent $\chi^2_{p/q}$ random is smaller for the 
generalized Marsaglia method compared with the algorithm by 
Marsaglia and Tsang for $p<q$---the regime of interest here. 
The expected number of uniform random variables required to do this
however is larger for the generalized Marsaglia method except if $p/q=1/2q$. 
We compare and discuss the numerical efficacy of both methods and of the 
algorithm of Ahrens and Dieter in Section~\ref{subsec:compacceff} below.

\subsection{Direct inversion}\label{subsec:viaggdi}
We restrict ourselves to case when the number of degrees of freedom 
$0<\nu<1$ is given to the first three decimal places and can thus
be expressed in the form
\begin{equation*}
\nu=\frac{p_5}{5}+\frac{p_{10}}{10}+\frac{p_{20}}{20}+
\frac{p_{50}}{50}+\frac{p_{100}}{100}+\frac{p_{200}}{200}+
\frac{p_{500}}{500}+\frac{p_{1000}}{1000}+\frac{p_{2000}}{2000},
\end{equation*}
for some $p_q\in\{0,1,2\}$ with $q\in\mathbb S$ where
$\mathbb S=\{5,10,20,50,100,200,500,1000,2000\}$.
For example, for $\nu=0.387$ we would have, since a $\text{N}(0,1,2q)$
random variable is needed to generate a $\chi_{1/q}^2$ one:
$p_{10}=1$, $p_{20}=1$, $p_{50}=2$, $p_{500}=1$, $p_{1000}=1$ and $p_{2000}=1$,
with the other $p_q$ coefficients equal to zero.
Of course in principle we can extend our method to degrees of freedom
$0<\nu<1$ given to any number of decimal places (see the discussion
at the end of this section). In Appendix~\ref{app:coeffs} we quote
the coefficients and parameter values for Pad\'e and Chebychev
approximants required for direct inversion for the
values of $q$ in $\mathbb S$. The algorithm for generating 
central $\chi_\nu^2$ samples, for any $\nu$ given to the first 
three decimal places, using direct inversion of 
generalized Gaussian samples is as follows. 
\begin{algorithm}[High accuracy central chi-square samples]\label{alg:CChisqdi}
\begin{enumerate}
\item For each $q\in\mathbb S$, generate $p_q$
independent uniform random variables over $[0,1]$: $(U_{1,q},\ldots,U_{p_q,q})$,
i.e.\/ a total of $\sum_{q\in\mathbb S}p_q$ independent uniform random variables.
\item For each $q\in\mathbb S$, use $(U_{1,q},\ldots,U_{p_q,q})$ and the direct inversion
algorithm in Appendix~\ref{app:di} to generate $p_q$ generalized
Gaussian samples $(Z_{1,q},\ldots,Z_{p_q,q})$.
\item Compute $\sum_{q\in\mathbb S}\sum_{k=1}^{p_q} Z_{k,q}^{2q}\sim\chi_{\nu}^2$.
\end{enumerate}
\end{algorithm} 
\begin{remark}
We chose the set $\mathbb S$ and decomposition of $\nu$ above
for efficiency and convenience. We could achieve greater
efficiency by decomposing $\nu$ more finely, but this would 
increase the number of Pad\'e and 
Chebychev approximants we need to compute and store. 
\end{remark}

\subsection{Comparison}\label{subsec:compacceff}
The natural question is how do the algorithms perform 
in practice? Two issues immediately surface. The first
is that the generalized Marsaglia approach is restricted 
to rational numbers. In practical applications this is 
not restrictive as all finite precision arithmetic is
in principle rational arithmetic. The second is that
the direct inversion approach, as we have implemented it here,
only allows for three decimal places. However,  this is not a restriction 
either, as we can in principle construct additional approximations to the inverse 
generalized Gaussian distribution function for larger
values of $q$. Indeed for  $q=10^4$, we computed
a $(3,1)$ Pad\'e approximation for the central region,
a $(4,5)$ Pad\'e approximation for the middle region
and a degree $10$ Chebychev approximation for the 
tail that guarantee accuracy of $10^{-9}$ across
all regions for the inverse generalized Gaussian distribution function.
We also note that parameter values 
are typically
determined by calibration and quoted to only 
$2$ or $3$ significant figures. 

Two leading gamma random variable acceptance-rejection 
sampling methods are those of Ahrens and Dieter~\cite{AD:chi} and 
Marsaglia and Tsang~\cite{MarTsa}. In Figure~\ref{fig:cputime} 
we compare CPU times needed to generate $10^5$ samples versus 
the number of degrees of freedom $\nu$ using the
generalized Marsaglia, Ahrens--Dieter, 
Marsaglia--Tsang and direct inversion methods. 
The values for the degrees of freedom chosen are
$\nu=(1+m)\cdot 10^{-4}$ for $m=0,1,\ldots,1000$ and
$\nu=0.101+m\cdot 10^{-3}$ for $m=0,1,\ldots,299$---we
omitted CPU times for the direct inversion method involving 
the fourth decimal place. The CPU times were generated
using \emph{compiled} Matlab code to better reflect
their potential practical implementations.
We observe the Ahrens--Dieter and Marsaglia--Tsang 
acceptance-rejection methods roughly require the same 
CPU time to generate central $\chi_\nu^2$ samples for 
any of these values of $\nu$. The Ahrens--Dieter method
also appears to be slightly more efficient.
The generalized Marsaglia approach shows more variation 
in the CPU time required.
In particular for example, for values of $\nu$ equal to 
$3$, $6$, $7$ and $9$ times $10^{-m}$ for all $m=2,3,4$, it takes
longer to generate central $\chi_\nu^2$ samples than for the other $\nu$ values. 
This is due to the fact that as rational numbers, with denominators
as powers of $10$, they do not simplify nicely to what might be 
considered the optimal format for sampling with this method,
namely $1/2q$ (see also Section \ref{subsec:viaggd}). For values of $\nu$ which cannot be reduced to this
optimal format, we need to sum over a number of 
generalized Gaussian samples to produce a central $\chi_\nu^2$
sample. However any decimal with a finite number of significant
figures can be written as the sum of fractions of powers of $10$.
Further a central $\chi_\nu^2$ random variable can be constructed by adding 
independent central $\chi_{\nu_i}^2$ random variables for which $\nu_1+\cdots+\nu_k=\nu$.
Indeed for all the other values of $\nu$ shown in Figure~\ref{fig:cputime}
we generated the central $\chi_\nu^2$ samples by the generalized Marsaglia 
approach by adding $\chi_{\nu_i}^2$ samples where the $\nu_i$ are fractions 
of powers of $10$ that generate each significant figure.
We observe that with this decomposition technique the 
generalized Marsaglia approach is overall marginally slower
than the Ahrens--Dieter and Marsaglia--Tsang methods. 
However this is partly an artifact of the requirement 
to add multiple $\chi_{\nu_i}^2$ samples to generate 
$\chi_{\nu}^2$. This could be alleviated by a finer
decomposition or more directly by implementing the 
generalized Marsaglia approach for the given 
rational $\nu$ (as we will see in Section~\ref{sec:simulations}). 
For the direct inversion $\chi_\nu^2$ 
sampling we used the even finer decomposition into 
fractions of $q\in\mathbb S$. We observe that overall, 
its performance is superior to the other methods. 

\begin{figure} 
  \begin{center}
  \includegraphics[width=10.0cm,height=8.0cm]{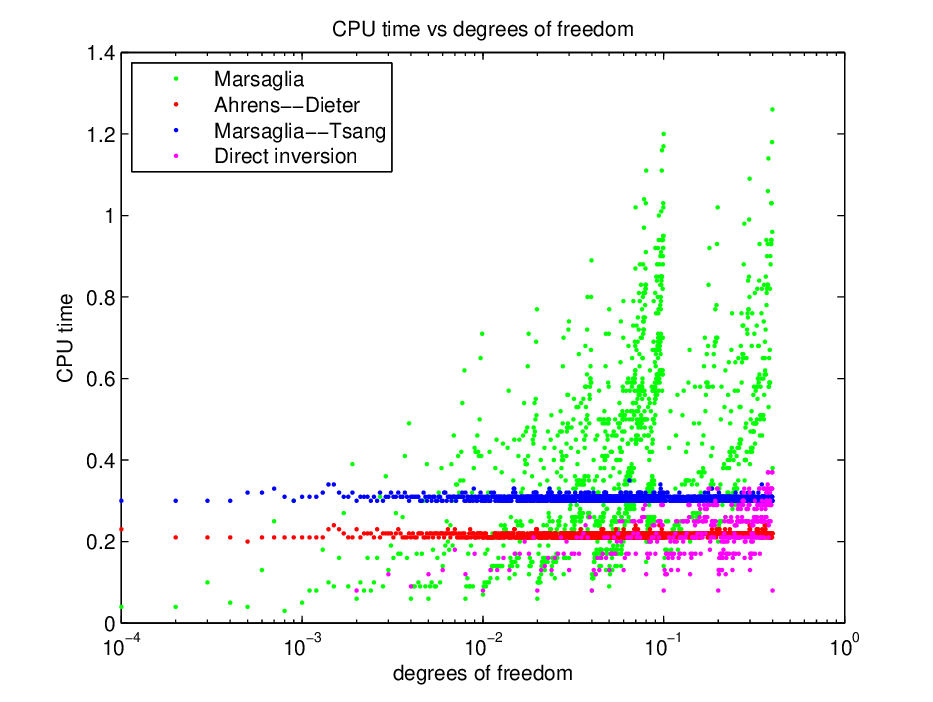} 
  \end{center}
\caption{CPU time versus the number of degrees of freedom $\nu$ using the
generalized Marsaglia, Ahrens--Dieter and Marsaglia--Tsang acceptance-rejection
methods, as well as the direct inversion method. 
The ordinate shows the CPU time needed to generate $10^5$ samples,
simultaneously for two sets of $\nu$ abscissa values given to three significant figures,
namely $\nu=(1+m)\cdot 10^{-4}$ for $m=0,1,\ldots,1000$ and
$\nu=0.101+m\cdot 10^{-3}$ for $m=0,1,\ldots,299$. We have omitted CPU times
for the direct inversion method involving the fourth decimal place.}
\label{fig:cputime}
\end{figure} 

\begin{remark} 
We could equivalently construct analogous Pad\'e and polynomial
approximations for the inverse $\chi^2_\nu$ distribution function.
Indeed we could generate tables analogous to those in 
Appendix~\ref{app:coeffs} for values of 
$\nu=0.4, 0.2, 0.1, 0.04, 0.02, 0.01,\ldots$ and so forth.
We chose to base our method on the generalized Gaussian distribution
as: (i) we could use it to sample from both the generalized Gaussian 
and central chi-square distributions simultaneously and we thus
afforded greater flexibility; (ii) we used the robust and highly
effective Beasley--Springer--Moro method for the inverse Gaussian
distribution as a starting point, and (iii) the behaviour of the
generalized Gaussian distribution in the limit of large $q$---to
a uniform $\text{U}(-1,1)$ distribution---was qualitatively and 
quantitatively appealing. However,
suppose one knew beforehand that for a known fixed number of 
degrees of freedom $\nu$ many $\chi^2_\nu$ samples would be required.
In such a scenario it may be worth expending preparation 
effort to generate Pad\'e and polynomial approximations
for that value of $\nu$, as one could then relatively efficiently
sample using the inverse $\chi^2_\nu$ distribution function approximation. 
\end{remark}
\begin{remark} 
Since the direct inversion method computes the 
inverse generalized Gaussian distribution function
to very high accuracy (and in principle to machine error), 
it can be combined with variance-reducing Monte Carlo techniques, 
e.g.~antithetic variates or conditional Monte Carlo where appropriate 
for the problem under consideration.
A further advantage of direct inversion methods 
for the use of  \emph{quasi-Monte Carlo} simulation is that
exactly one random input variable is required to
generate one sample of the target distribution.
See Chapter~2 in Glasserman~\cite{Glass} for a more detailed discussion. 
\end{remark}


\subsection{Non-central chi-square sampling}\label{subsec:NCChisq}
We can generate non-central chi-square samples from
central chi-square samples as follows. Following 
Johnson~\cite{Jo1959} and Siegel~\cite{Siegel}
any $\chi_\nu^2(\lambda)$ random variable can be
decomposed as $\chi_\nu^2(\lambda)\sim\chi_0^2(\lambda)+\chi_\nu^2$.
Here $\chi_0^2(\lambda)$ random variable can be generated
by choosing a random variable $N$ from a 
Poisson distribution with mean $\lambda/2$, and
then generating a central $\chi_{2N}^2$ sample. Also
see Broadie and Kaya~\cite{BK} for more details.
Hence we are left with the problem of how 
to sample from a $\chi_{2N}^2$ distribution---we use
either of the two methods of the last section to
sample from the $\chi_\nu^2$ distribution. The following
algorithm produces a $\chi_{\nu}^2(\lambda)$ sample.
\begin{algorithm}[Non-central chi-square samples]\label{alg:noncentral}
\begin{enumerate}
\item Use Algorithm~\ref{alg:centralchisquare} or~\ref{alg:CChisqdi}
to generate a $\chi_\nu^2$ random variable $Z$.
\item Generate a Poisson distributed random variable $N$ with mean $\lambda/2$.
\item Generate $N$ independent uniform $\text{U}(0,1)$ random variables $U_1,\ldots,U_{N}$.
\item Compute $-2\bigl(\log(U_1)+\cdots+\log(U_N)\bigr)+Z\sim\chi_{\nu}^2(\lambda)$.
\end{enumerate}
\end{algorithm}
For small non-centrality $\lambda\leqslant10$, 
we generate the Poisson random variable using the
exact direct inversion method (inverse transform method) in 
Glasserman~\cite[p.~128]{Glass}. (An alternative method
for small non-centrality is the acceptance-rejection method
found in Knuth~\cite[p.~137]{Knuth}.) 
When the non-centrality parameter $\lambda>10$ we could
use the Normal approximation from Fishman~\cite[p.~212, paragraph~3]{Fish}: 
$\max\{0,\text{floor}(\lambda+1/2+\sqrt\lambda\,Y)\}$,
where $Y$ is a standard Normal random variable. However
we are endeavouring to retain accuracy as far as possible and
prefer to avoid such approximations. There are several other 
notable methods for generating Poisson random variables in this 
parameter regime, in particular that of Ahrens and Dieter~\cite{AD} 
and the PRUA$^*$ method found in Fishman~\cite[p.~214]{Fish};
but these are acceptance-rejection methods.
%

Large non-centrality $\lambda>10$ is relevant to our
applications to the Heston model: the Poisson mean is inversely proportional to
the discretization stepsize, which  is required to be small when for example one 
wishes to price path-dependent derivatives. We thus propose the 
following approach to chi-square sampling when the 
non-centrality parameter $\lambda$ is large, 
say larger than a critical value $\bar{\lambda}$,  
modifying Algorithm~\ref{alg:noncentral} as follows.
The Poisson variable $N$ can be written as a sum of two
independent Poisson random variables $\bar{N}$  and $P$ with mean
$\bar{\lambda}/2$ and  mean
$\lambda/2-\bar{\lambda}/2$, respectively. The chi-square
distribution can be represented as
\begin{equation*}
\chi^2_\nu(\lambda)\sim\chi_{\nu+2N}^2\sim\chi_{\nu+2\bar{N}+2P}^2\sim\chi^2_{\nu+2\bar{N}}(\lambda-\bar{\lambda}).
\end{equation*}
We sample $\bar{N}$ from the Poisson distribution with parameter
$\bar{\lambda}/2$ using the direct inversion method
in Glasserman~\cite[p.~128]{Glass} (mentioned above for small non-centrality). 
If $\bar{N}\not=0$, then the $\chi^2_{\nu+2\bar{N}}(\lambda-\bar{\lambda})$ variable
can be represented as a sum of a $\chi^2_\nu$ variable and an independent
$\chi^2_{2\bar{N}}(\lambda-\bar{\lambda})$ variable
$\chi^2_{\nu+2\bar{N}}(\lambda-\bar{\lambda})\sim
\chi^2_{\nu} + \chi^2_{2\bar{N}}(\lambda-\bar{\lambda})$. 
A sample from this distribution can be generated efficiently by
sampling $\bar{N}-1$ independent  uniform $\text{U}(0,1)$ random
variables, say $U_1,\ldots,U_{\bar{N}-1}$, two independent  standard normal random
variables, say $V_1$ and $V_2$,  and an independent $\chi_\nu^2$ random variable, say $Z$, 
using Algorithm~\ref{alg:centralchisquare}. Then 
$-2(\log U_1+\cdots+\log U_{\bar{N}-1})+V_1^2+\bigl(V_2+\sqrt{\lambda-\bar{\lambda}}\bigr)^2
+Z\sim\chi_{\nu+2\bar{N}}(\lambda-\bar{\lambda})$.
If $\bar{N}=0$, then we have to sample a 
$\chi^2_\nu(\lambda-\bar{\lambda})$ random variable, but now with a
non-centrality parameter   $\lambda-\bar{\lambda}<\lambda$. 
If $\lambda-\bar{\lambda}\leqslant\bar{\lambda}$,
then the direct inversion method in Glasserman~\cite[p.~128]{Glass} 
is an efficient method to sample from this distribution. 
If  $\lambda-\bar{\lambda}$ is larger than $\bar{\lambda}$,  then
we  repeat this process until the sample of the Poisson random variable with
mean $\bar{\lambda}$ returns a non-zero value or until the non-centrality parameter
is smaller than $\bar{\lambda}$, whichever comes first. 
To summarize, the algorithm is as follows.
\begin{algorithm}[Chi-square samples for large non-centrality parameter]\label{alg:Poisson2}
\begin{enumerate}
\item If $\lambda >\bar{\lambda}$, generate a Poisson random variable $\bar{N}$
with mean $\bar{\lambda}/2$ using direct inversion or the method in Knuth.
\item 
\begin{enumerate}
\item If $\bar{N}\not=0$
\begin{enumerate}
\item  generate $\bar{N}-1$  independent  uniform $\text{U}(0,1)$ random
variables, say $U_1,\ldots$, $U_{\bar{N}-1}$, two independent  standard normal random
variables, say $V_1$ and $V_2$, and use Algorithm \ref{alg:centralchisquare} 
to generate an independent $\chi^2_{\nu}$ random variable $Z$.
\item Compute $-2(\log U_1+\cdots+\log U_{\bar{N}-1})+V_1^2
+\bigl(V_2+\sqrt{\lambda-\bar{\lambda}}\bigr)^2+Z\sim\chi_{\nu}^2(\lambda)$.
\end{enumerate}
\item If $\bar{N}=0$, set $\lambda\leftarrow\lambda-\bar{\lambda}$. 
\begin{enumerate}
\item If $\lambda > \bar{\lambda}$, repeat from Step 1.
\item If $\lambda\leqslant\bar{\lambda}$, use 
Algorithm~\ref{alg:noncentral} to generate  
an independent $\chi^2_{\nu}(\lambda)$ random variable.
\end{enumerate}
\end{enumerate}
\end{enumerate}
\end{algorithm}
Algorithm~\ref{alg:Poisson2} provides an exact method to sample a
$\chi_\nu^2(\lambda)$ random variable for a large non-centrality parameter. 
Note that $P\{\bar{N}\not=0\}=1-\exp(-\bar{\lambda}/2)$. The expected
number of iterations to generate a chi-square sample is thus the minimum of
$1/\bigl(1-\exp(-\bar{\lambda}/2)\bigr)$ and $\lfloor\lambda/\bar{\lambda} \rfloor$.
However more importantly, note that the number of steps is bounded
above by $\lfloor\lambda/\bar{\lambda}\rfloor$. In our applications,
we set $\bar{\lambda}=20$. Thus the probability of performing
Step~2(b) in Algorithm~\ref{alg:Poisson2} is $P\{\bar{N}=0\}\approx10^{-5}$,
the probability  of repeating Step~2(b) is of order $10^{-9}$. The probability
of repeating Step~2(b) once more is of order $10^{-13}$. In most applications
this can be considered negligibly small. Thus practically, Algorithm~\ref{alg:Poisson2} 
can be stopped after a maximum of 2 iterations (beyond that a positive value 
could by arbitrarily assigned if required).
Alternatively, we could  enforce an upper bound on the total number of random variables 
required here, by using the Andersen--Patnaik matched normal squared approximation for
non-centrality values $\lambda$ that are very large, i.e.\/ large enough so
as not to compromise the accuracy of the direct inversion method we have
carried thus far.
\begin{remark}
When the number of degrees of freedom $\nu$ is one or greater,
the decomposition $\chi_\nu^2(\lambda)\sim\chi_1^2(\lambda)+\chi_{\nu-1}^2$
radically simplifies the non-central chi-square simulation process.
A $\chi_1^2(\lambda)$ random variable is straightforwardly generated by squaring an
appropriately mean-shifted standard Normal random variable. If $\nu-1<1$
then a central $\chi_{\nu-1}^2$ random variable can be generated using
the methods described above. Whilst if $\nu-1\geqslant1$, then 
we can decompose 
$\chi_{\nu-1}^2\sim\chi_{\lfloor\nu-1\rfloor}^2+\chi_{\nu-1-\lfloor\nu-1\rfloor}^2$.
The component involving the integer part of $\nu-1$, i.e.\/ $\lfloor\nu-1\rfloor$, 
can be simulated by taking the sum of the logarithms of 
$\lfloor\nu-1\rfloor$ uniform random variables 
(by analogy with Step~4 in the Algorithm~\ref{alg:noncentral} above).
The component involving the remaining fractional degrees of
freedom $\nu-1-\lfloor\nu-1\rfloor$ can again be simulated using the 
methods described above. 
\end{remark}

\section{Application: the Cox-Ingersoll-Ross process and Heston model}
\label{sec:simulations}
We illustrate the accuracy of our new methods for chi-square sampling of the CIR process
applied to the Heston model. 

\subsection{The CIR process}\label{subsec:CIRsimulations}
The mean-reverting square-root process or \emph{Cox-Ingersoll--Ross (CIR) process} was 
first used in a financial context by Cox, Ingersoll and Ross~\cite{CIR} to model the 
short rate of interest and has been applied in numerous financial applications 
since. It can be expressed in the form
\begin{align*}
\mathrm{d}V_t=&\;\kappa(\theta-V_t)\,\mathrm{d}t
+\varepsilon\sqrt{V_t}\,\mathrm{d}W^1_t,
\end{align*} 
where $W^1$ is a Wiener process and  $\kappa$, $\theta$
and $\varepsilon$ are positive constants. 
It is a mean-reverting process with mean $\theta$, 
rate of convergence $\kappa$ and square root diffusion
scaled by $\varepsilon$.
By the Yamada condition this model has a unique strong solution.
Interestingly, though the explicit form of the solution as a function of the driving
Wiener process $W^1$ is not known, its transition probability is explicitly given 
as a scaled 
non-central chi-square distribution. 
We define the \emph{degrees of freedom} for this process to be
$\nu\coloneqq4\kappa\theta/\varepsilon^2$.
When $\nu\in\mathbb N$ the process $V_t$ can be reconstructed
from the sum of squares of $\nu$ Ornstein--Uhlenbeck processes; 
hence the label of degrees of freedom. 
When $\nu<2$ the zero boundary is attracting and attainable, while when
$\nu\geqslant2$, the zero boundary is non-attracting. In particular, 
the CIR process is non-negative. These 
properties are immediate from the Feller boundary criteria, 
see Feller~\cite{Feller}. These are based on inverting 
the associated stationary elliptic Fokker--Planck operator, 
with boundary conditions, and can be found for example in 
Karlin and Taylor~\cite{KT}. 

Here we focus on the challenge of $\nu<2$ and in particular
cases when $\nu\ll\,1$. 
Importantly, though the zero boundary is attracting and attainable,
it is strongly reflecting---if the process reaches zero it
leaves it immediately and bounces back into the positive 
domain---see Revuz and Yor~\cite[p.~412]{RY}. We detailed
in the introduction how this case is a major obstacle, 
particularly for direct discretization methods. A comprehensive
account of direct discretization methods can
be found in Lord, Koekkoek and Van Dijk~\cite{LKVD}. 
The \emph{full truncation method} proposed by Lord, Koekkoek 
and Van Dijk allows the variance process to be negative over successive
timesteps---when the variance evolves deterministically with
an upward drift of $\kappa\theta$ and the volatility of the 
price process is taken to be zero. Andersen~\cite{An} and
Haastrecht and Pelsser~\cite{HP} complete thorough comparisons
with full truncation method of Lord, Koekkoek and Van Dijk.

The method we propose follows the lead of Broadie and Kaya~\cite{BK}
and is based on simulating the known transition probability density 
for the Cox--Ingersoll--Ross process. 
We quote the following form for this transition density,
that can be found in Cox, Ingersoll and Ross~\cite{CIR}, from a
proposition in Andersen~\cite{An}.
\begin{proposition}
Let $F_{\chi_\nu^2(\lambda)}(z)$ be the cumulative distribution
function for the non-central chi-squared distribution with
$\nu$ degrees of freedom and non-centrality parameter $\lambda$:
\begin{equation*}
F_{\chi_\nu^2(\lambda)}(z)=
\frac{\exp(-\lambda/2)}{2^{\nu/2}}\sum_{j=0}^\infty
\frac{(\lambda/2)^j}{j!2^j\Gamma(\nu/2+j)}
\int_0^z\xi^{\nu/2+j-1}\exp(-\xi/2)\,\mathrm{d}\xi.
\end{equation*}
Set $\nu\coloneqq4\kappa\theta/\varepsilon^2$ and define 
$\eta(h)\coloneqq 4\kappa\,\exp(-\kappa h)/
\varepsilon^2\bigl(1-\exp(-\kappa h)\bigr)$,
where $h=t_{n+1}-t_n$ for distinct times $t_{n+1}>t_n$.
Set $\lambda\coloneqq V_{t_n}\cdot\eta(h)$.
Then conditional on $V_{t_n}$, $V_{t_{n+1}}$ is distributed
as $\exp(-\kappa h)/\eta(h)$ times a non-central 
chi-squared distribution with $\nu$ degrees of freedom and
and non-centrality parameter $\lambda$,
i.e.\ 
\begin{equation*}
\mathbb P\bigl(V_{t_{n+1}}<x~\big|~V_{t_n}\bigr)=
F_{\chi_\nu^2(\lambda)}\bigl(x\cdot\eta(h)/\exp(-\kappa h)\bigr).
\end{equation*}
\end{proposition}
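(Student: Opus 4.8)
The plan is to identify the conditional law of $V_{t_{n+1}}$ given $V_{t_n}$ through its Laplace transform, and then to recognise that transform as the Laplace transform of $\exp(-\kappa h)/\eta(h)$ times a $\chi_\nu^2(\lambda)$ random variable. Since a probability measure on $[0,\infty)$ is uniquely determined by its Laplace transform on $(0,\infty)$, matching the two transforms proves the claim; the stated formula for $\mathbb P(V_{t_{n+1}}<x\mid V_{t_n})$ then follows by evaluating the distribution function of the scaled variable at $x\cdot\eta(h)/\exp(-\kappa h)$.

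First I would compute the conditional Laplace transform, exploiting the affine structure of the Cox--Ingersoll--Ross generator. Fix $u>0$ and set $\phi(t,v)\coloneqq\mathbb E\bigl[\exp(-uV_{t_{n+1}})\mid V_t=v\bigr]$ for $t\in[t_n,t_{n+1}]$. By the Feynman--Kac formula, $\phi$ solves the backward Kolmogorov equation
\[
\partial_t\phi+\kappa(\theta-v)\,\partial_v\phi+\tfrac12\varepsilon^2 v\,\partial_{vv}\phi=0,\qquad \phi(t_{n+1},v)=\exp(-uv).
\]
Since the drift and the squared diffusion coefficient are both affine in $v$, I would try the exponential-affine ansatz $\phi(t,v)=\exp\bigl(A(t)+B(t)\,v\bigr)$. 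Substituting and matching the coefficients of $v^0$ and $v^1$ reduces the PDE to the Riccati system $\dot B=\kappa B-\tfrac12\varepsilon^2B^2$, $\dot A=-\kappa\theta B$, with terminal data $B(t_{n+1})=-u$, $A(t_{n+1})=0$. The $B$-equation is of Bernoulli type and integrates in closed form, after which $A$ follows by a single quadrature. Writing $h=t_{n+1}-t_n$, and using $\nu=4\kappa\theta/\varepsilon^2$ together with the definition of $\eta(h)$, one simplifies to
\[
\phi(t_n,v)=\bigl(1+2u\,\mathrm{e}^{-\kappa h}/\eta(h)\bigr)^{-\nu/2}\exp\!\left(-\frac{v\,u\,\mathrm{e}^{-\kappa h}}{1+2u\,\mathrm{e}^{-\kappa h}/\eta(h)}\right).
\]

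Second I would compute the Laplace transform of the candidate law. A central $\chi_r^2$ random variable is gamma distributed with shape $r/2$ and scale $2$, so $\mathbb E[\exp(-s\,\chi_r^2)]=(1+2s)^{-r/2}$. Combining this with the Poisson--mixture representation $F_{\chi_\nu^2(\lambda)}=\sum_k\mathbb P(N=k)\,F_{\chi_{2k+\nu}^2}$ with $N\sim\mathrm{Poisson}(\lambda/2)$ recorded earlier, and summing the resulting exponential series in $k$, gives
\[
\mathbb E\bigl[\exp(-s\,\chi_\nu^2(\lambda))\bigr]=(1+2s)^{-\nu/2}\exp\!\left(-\frac{\lambda s}{1+2s}\right).
\]
Now if $X\sim\chi_\nu^2(\lambda)$ with $\lambda=v\,\eta(h)$, the Laplace transform of $Y\coloneqq\mathrm{e}^{-\kappa h}X/\eta(h)$ at $u>0$ equals $\mathbb E[\exp(-s'X)]$ with $s'=u\,\mathrm{e}^{-\kappa h}/\eta(h)$; substituting this $s'$ and $\lambda=v\,\eta(h)$ into the last display reproduces precisely the expression for $\phi(t_n,v)$ obtained above. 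By uniqueness of the Laplace transform on $[0,\infty)$, the conditional law of $V_{t_{n+1}}$ given $V_{t_n}=v$ is that of $\mathrm{e}^{-\kappa h}X/\eta(h)$, which is the assertion.

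The heart of the matter is not the algebra---the Riccati integration and the bookkeeping of $\kappa$, $\varepsilon$ and $\eta(h)$ are routine---but the analytic justification in the regime of interest, $\nu<2$, where the zero boundary is attracting and attainable. There one must check that the Feynman--Kac representation, and hence the backward Kolmogorov equation, still hold: this relies on the fact that $V$ is instantaneously reflected and spends zero Lebesgue time at the origin, that $v\mapsto\exp(A+Bv)$ has the regularity and growth needed for the associated martingale argument, and that the affine solution already satisfies the correct limiting (Neumann-type) boundary behaviour at $v=0$. An alternative that circumvents the boundary analysis is to establish the identity first for $\nu\in\mathbb N$ by writing $V$ as a sum of squares of $\nu$ Ornstein--Uhlenbeck processes and reading off the non-central chi-squared transition law directly, and then to extend to all $\nu>0$ by analytic continuation of the Laplace transform in $\nu$ together with its uniqueness.
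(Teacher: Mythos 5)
Your argument is correct, but note that the paper itself offers no proof of this proposition: it is quoted verbatim from Andersen, who in turn takes it from Cox, Ingersoll and Ross, so there is no internal derivation to compare against. Your route --- exponential-affine ansatz, Riccati integration, and identification of the conditional Laplace transform with $(1+2uc)^{-\nu/2}\exp\bigl(-\lambda uc/(1+2uc)\bigr)$ where $c=\mathrm{e}^{-\kappa h}/\eta(h)$ and $\lambda=v\,\eta(h)$ --- is the standard one and your algebra checks out, including the Poisson-mixture computation of the non-central chi-square Laplace transform, which is consistent with the mixture representation $F_{\chi_\nu^2(\lambda)}=\sum_k\mathbb P(N=k)F_{\chi_{2k+\nu}^2}$ that the paper records later. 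One remark on the point you flag as ``the heart of the matter'': the boundary analysis at $v=0$ for $\nu<2$ can be bypassed entirely without resorting to the integer-$\nu$/analytic-continuation detour. Since the terminal datum forces $B(t)\leqslant 0$ on $[t_n,t_{n+1}]$, the process $M_s=\exp\bigl(A(s)+B(s)V_s\bigr)$ obtained by applying It\^o's formula to the globally smooth function $(s,v)\mapsto\exp\bigl(A(s)+B(s)v\bigr)$ along the strong solution $V$ is a \emph{bounded} local martingale (bounded by $\sup_s\mathrm{e}^{A(s)}$ because $V\geqslant 0$), hence a true martingale; the identity $\mathbb E[M_{t_{n+1}}\mid\mathcal F_{t_n}]=M_{t_n}$ then delivers the Laplace transform with no appeal to Feynman--Kac regularity or to the reflecting behaviour at the origin. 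With that substitution your sketch becomes a complete and rigorous proof for all $\nu>0$.
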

Hence for Cox--Ingersoll--Ross sampling from timestep
$t_n$ to $t_{n+1}$, we set $\lambda=V_{t_n}\cdot\eta(h)$ 
and compute/approximate 
$V_{t_{n+1}}=\chi_\nu^2(\lambda)\cdot\exp(-\kappa h)/\eta(h)$.

\begin{figure} 
  \begin{center}
  \includegraphics[width=8.0cm,height=5cm]{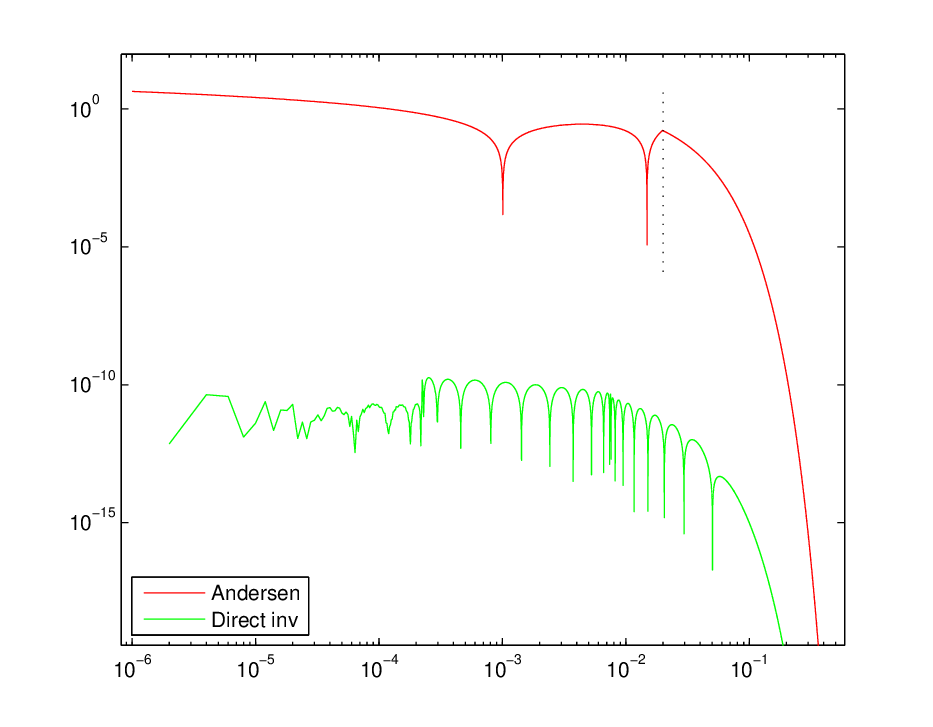}
  \end{center}
\caption{The absolute error in the direct inversion and Andersen approximations
of the inverse chi-square $\chi_\nu^2$ distribution function for $\nu=0.02$.
To the right of the vertical dotted line the Andersen approximation is identically
zero and the curve shown represents the absolute difference between zero and 
the exact inverse distribution function.}
\label{fig:icdferr}
\end{figure} 

\begin{figure} 
  \begin{center}
  \includegraphics[width=6.0cm,height=5.0cm]{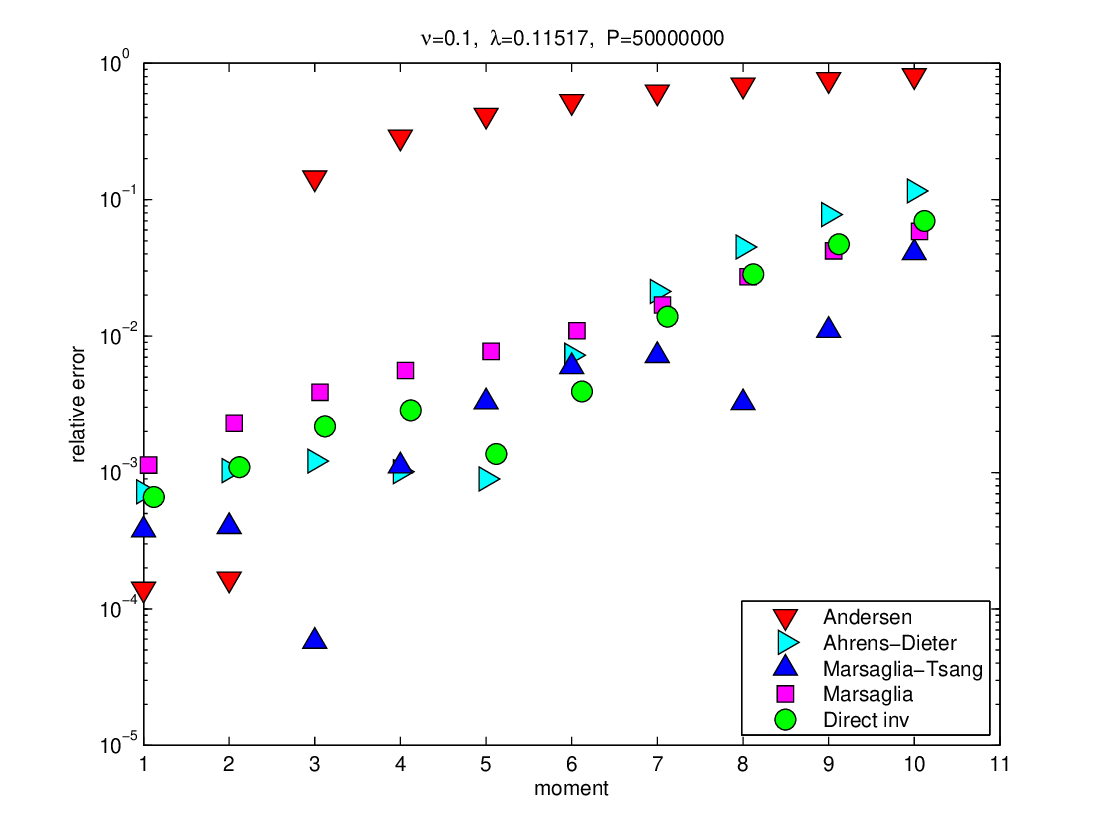}
  \includegraphics[width=6.0cm,height=5.0cm]{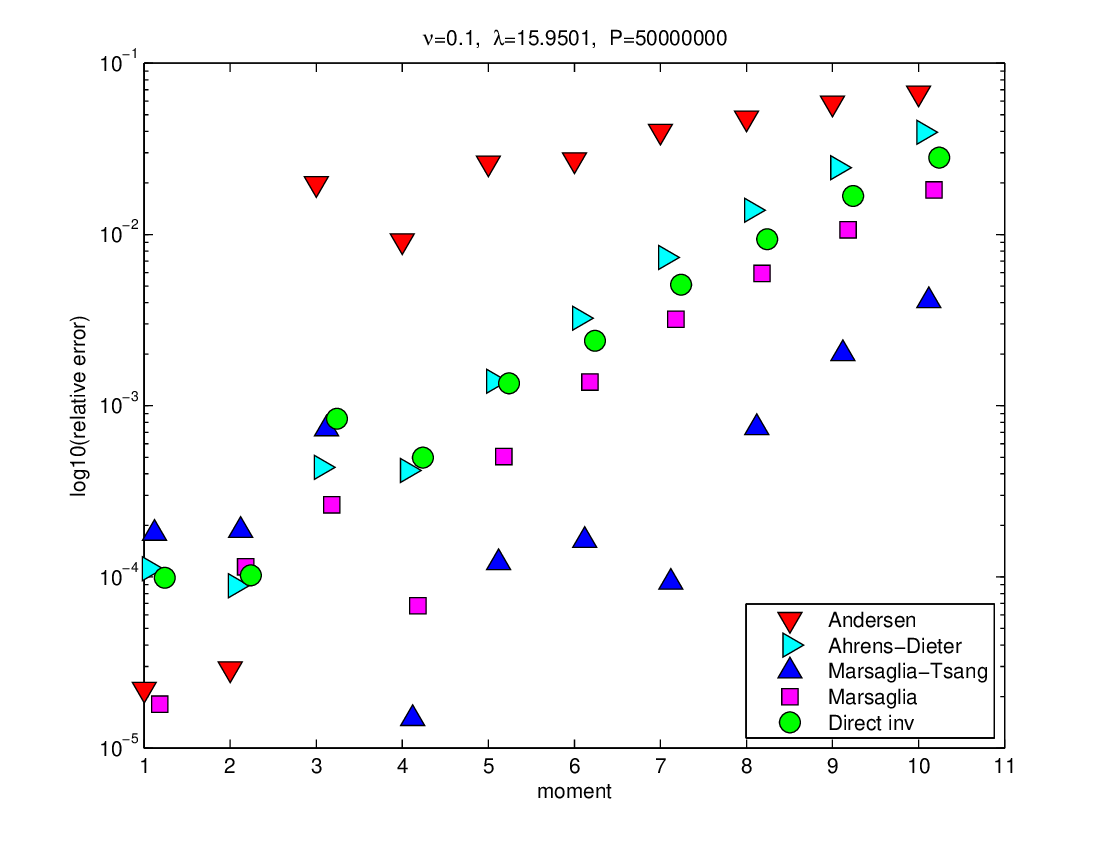}\\
  \includegraphics[width=6.0cm,height=5.0cm]{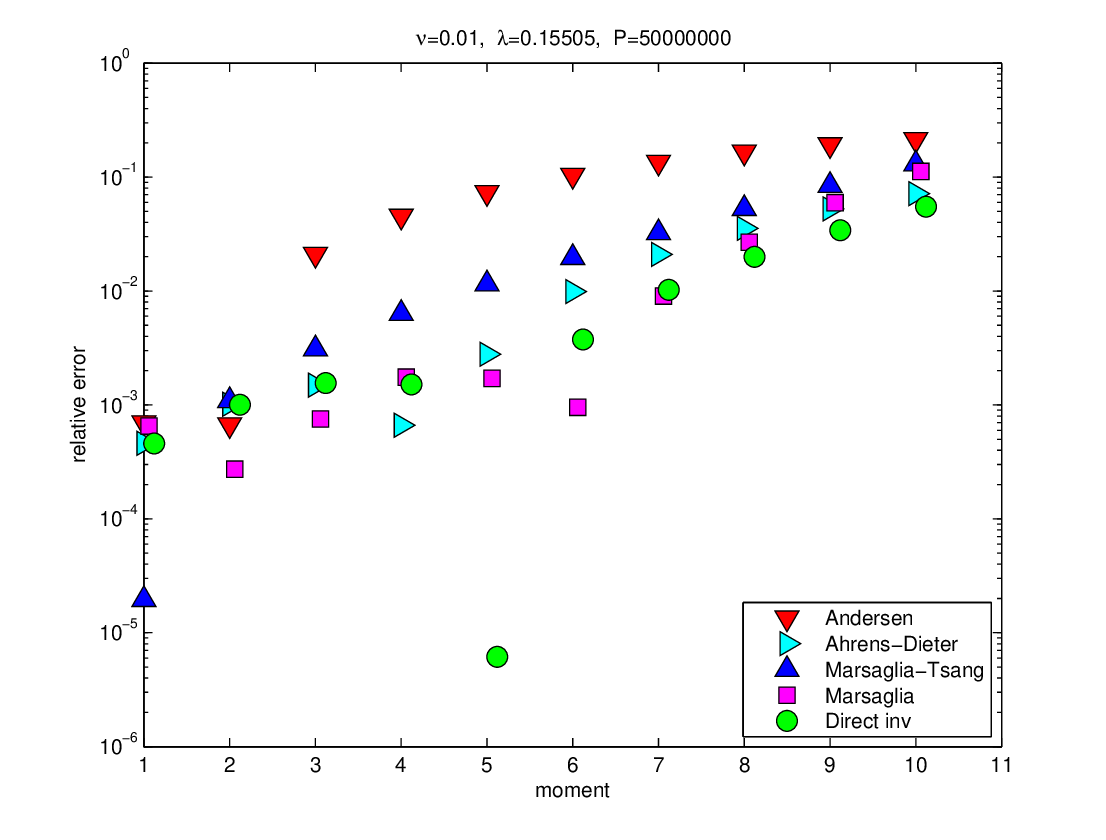}
  \includegraphics[width=6.0cm,height=5.0cm]{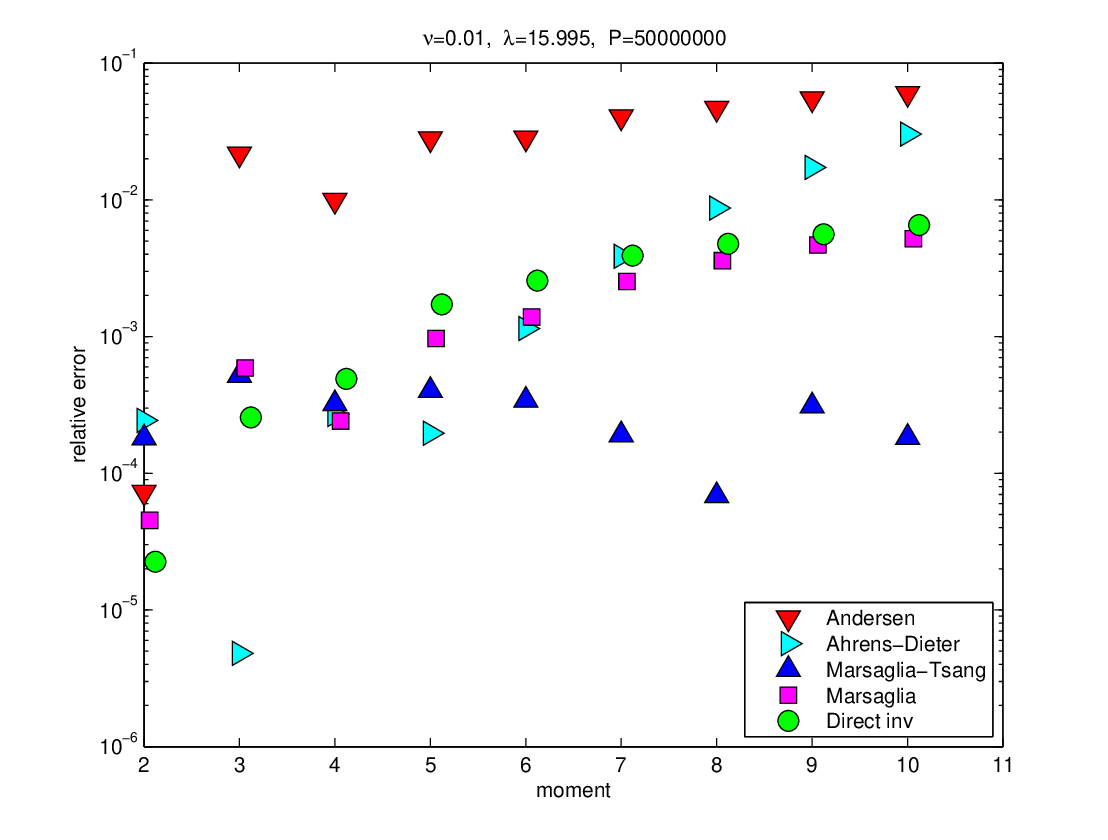}\\
  \includegraphics[width=6.0cm,height=5.0cm]{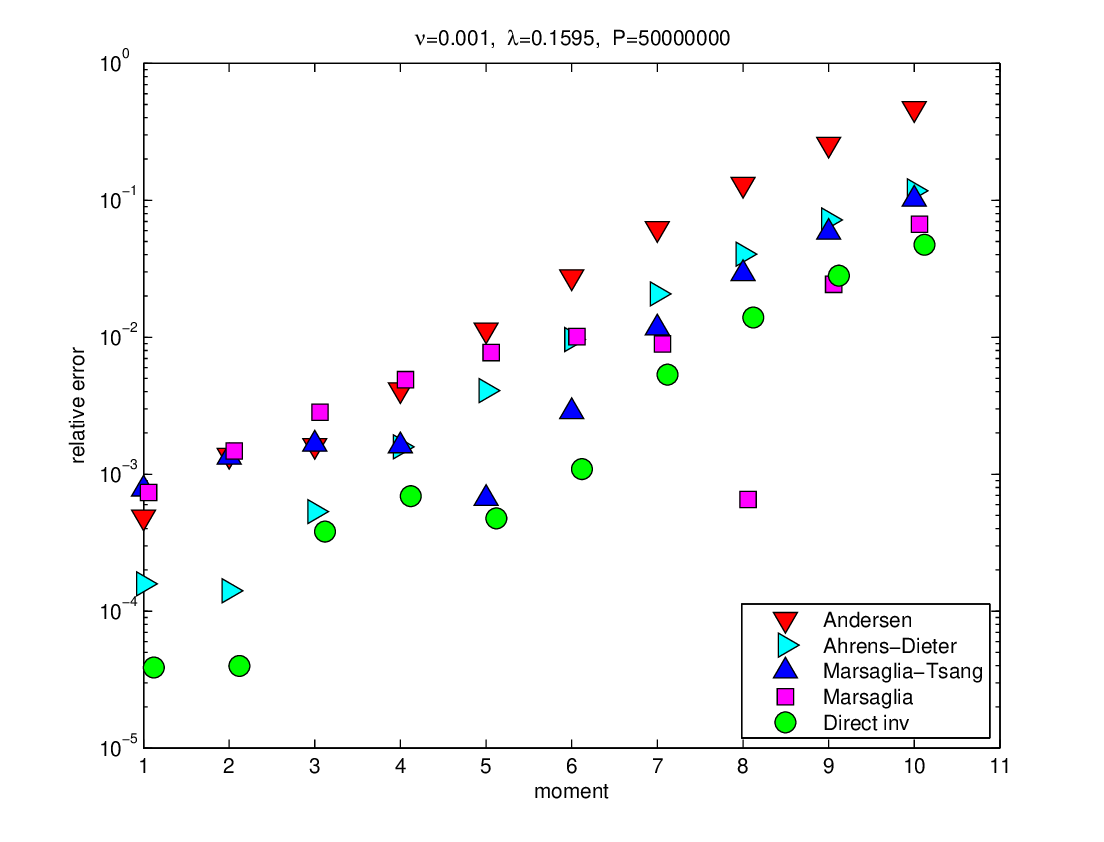}
  \includegraphics[width=6.0cm,height=5.0cm]{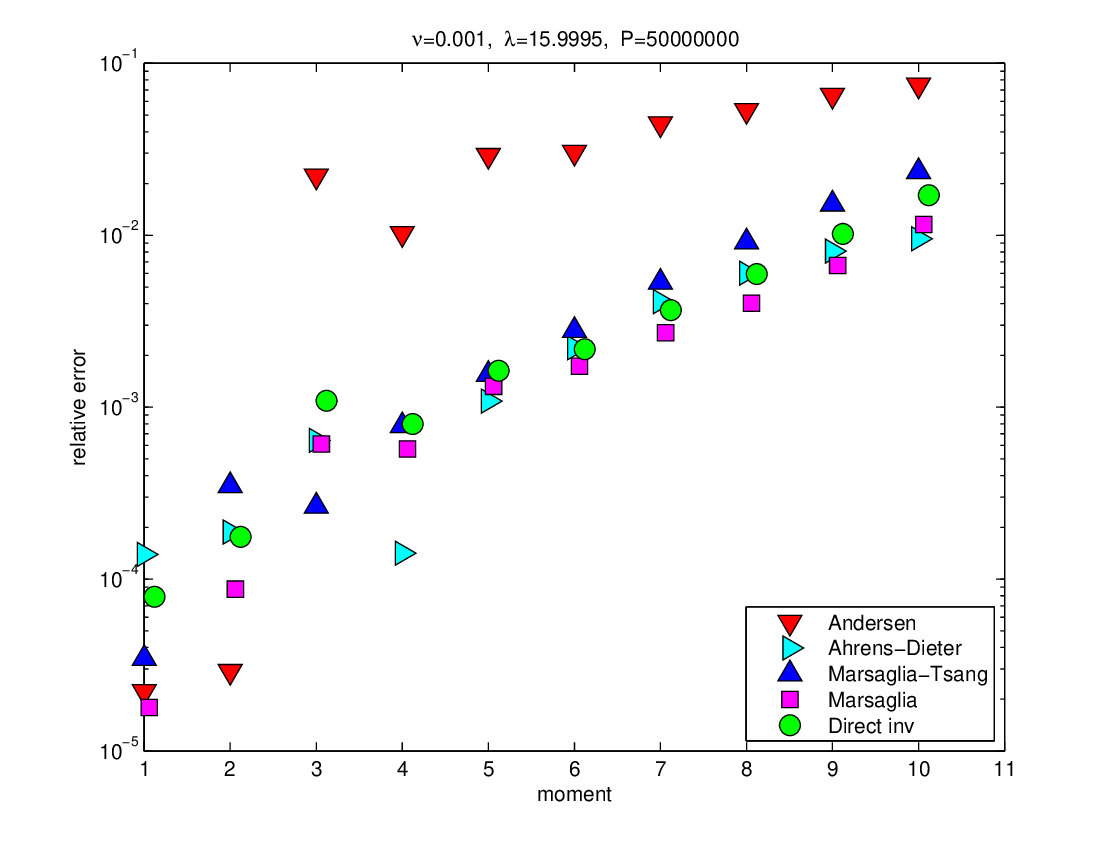}\\
  \end{center}
\caption{The relative error in the first through tenth moments of the 
non-central chi-square sampling methods shown. The six
panels correspond to the three values of the degrees
of freedom $\nu=0.1,0.01,0.001$ (top to bottom), and the
values for the non-centrality 
$\lambda=0.11517,15.9501,0.15505, 15.995,0.1595,15.9995$
(left to right, then top to bottom). The number of samples used
in each case is $5\times 10^7$.}
\label{fig:momentserr}
\end{figure} 

\begin{figure} 
  \begin{center}
  \includegraphics[width=6.0cm,height=5.0cm]{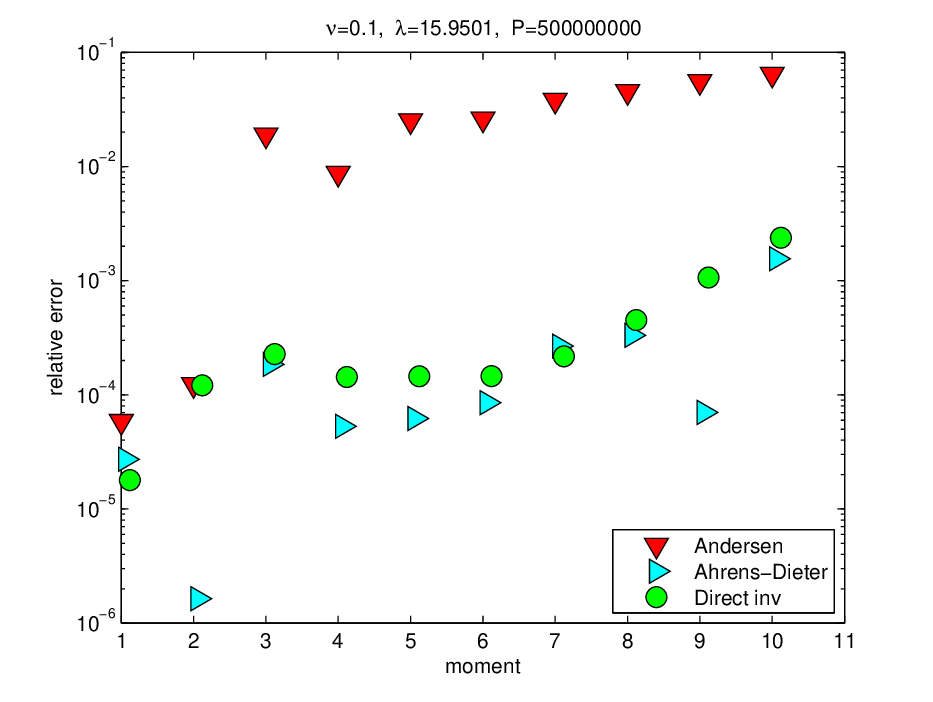}
  \includegraphics[width=6.0cm,height=5.0cm]{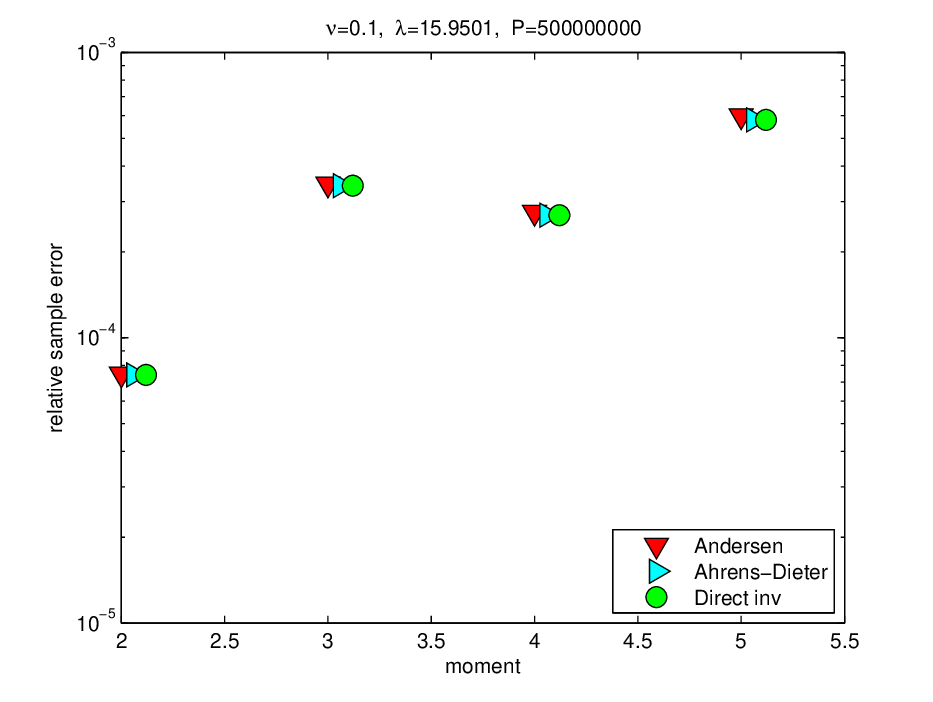}
  \includegraphics[width=6.0cm,height=5.0cm]{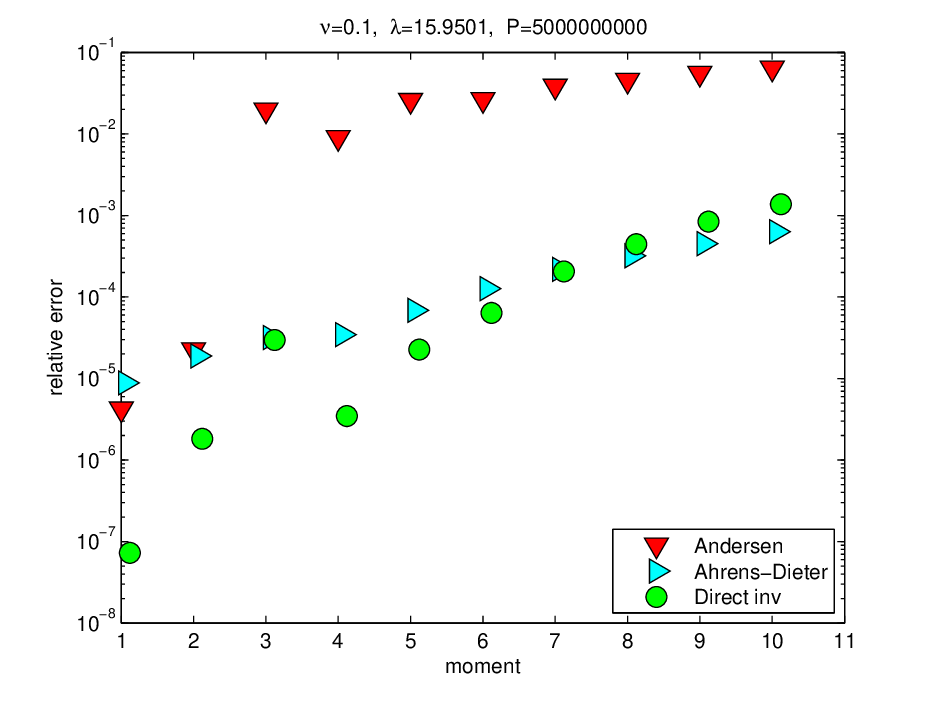}
  \includegraphics[width=6.0cm,height=5.0cm]{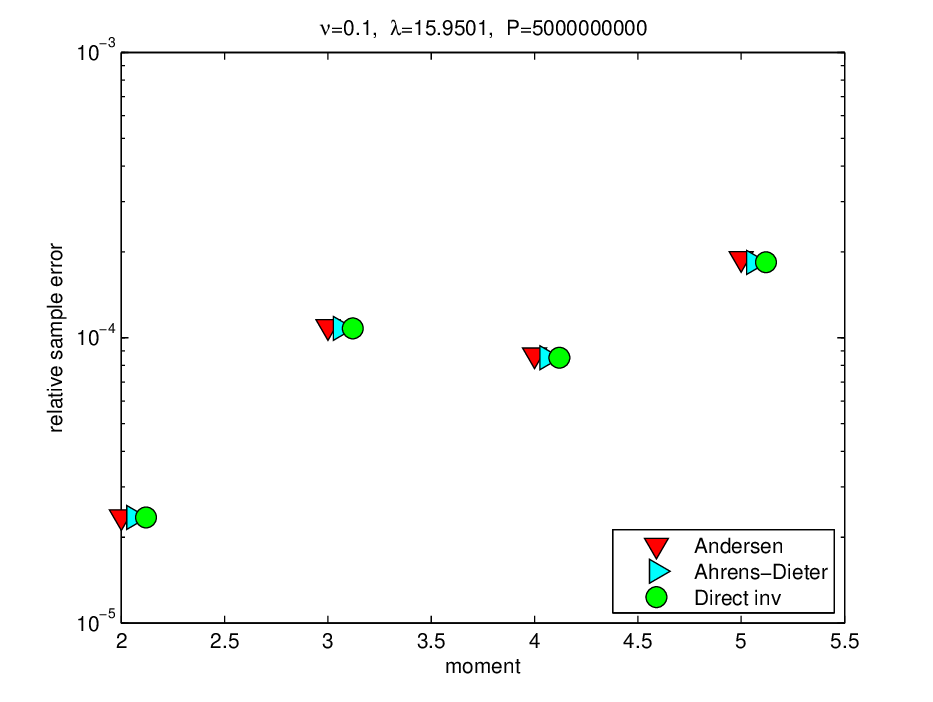}
  \end{center}
\caption{The left panels show the relative error in the first through tenth moments of the 
non-central chi-square sampling methods shown, for the case $\nu=0.1$,
$\lambda=15.9501$ (corresponding to the upper right panel of Figure~\ref{fig:momentserr}).
In the top left panel we used $5\times 10^8$ samples, while in the 
lower left panel we used $5\times 10^9$ samples. The panels on the right show 
the corresponding relative sample errors for the second through fifth moments.}
\label{fig:momentserrlargesample}
\end{figure} 

\begin{figure} 
  \begin{center}
  \includegraphics[width=6.0cm,height=5.0cm]{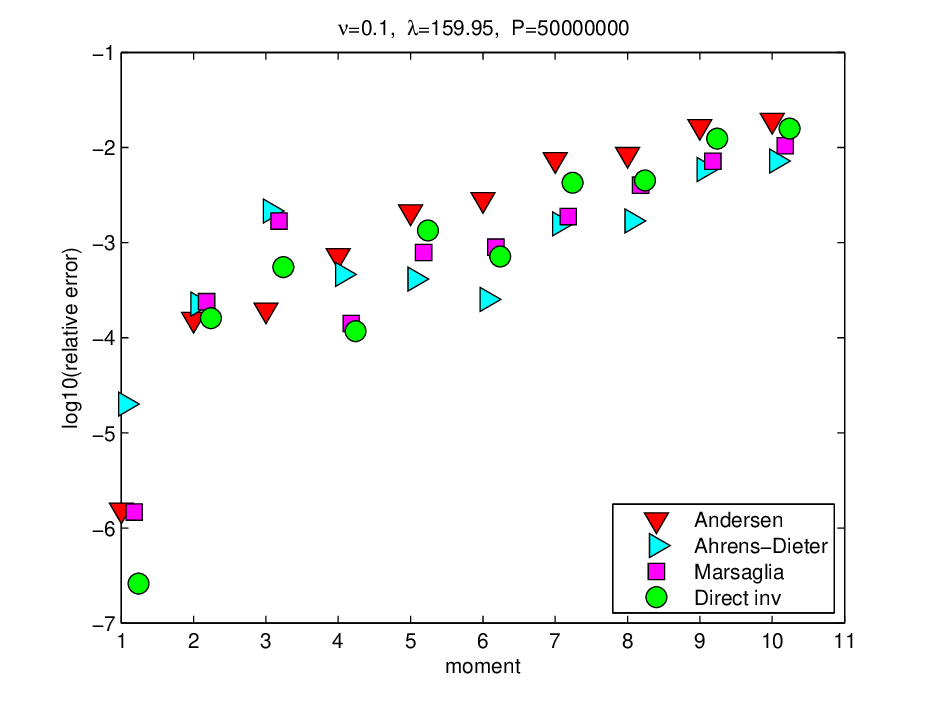}
  \includegraphics[width=6.0cm,height=5.0cm]{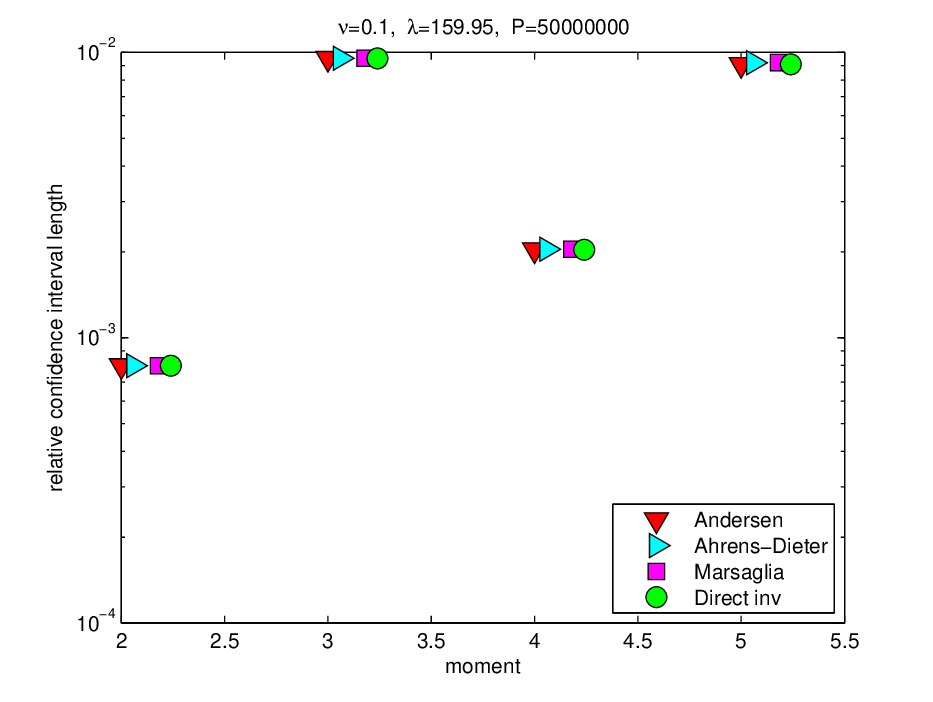}
  \end{center}
\caption{The left panels show the relative error in the first through tenth moments of the 
non-central chi-square sampling methods shown, for the case $\nu=0.1$,
$\lambda=159.95$. We used $5\times 10^7$ samples. The panel on the right shows 
the corresponding relative sample errors for the second through fifth moments.}
\label{fig:momentserrlargenoncentrality}
\end{figure} 

We illustrate in Figures~\ref{fig:icdferr}, 
\ref{fig:momentserr}, \ref{fig:momentserrlargesample} and
\ref{fig:momentserrlargenoncentrality} the performance in
terms of accuracy of our direct inversion and generalized Marsaglia
methods for a representative set of different parameter values for 
$\nu$ and $\lambda$.
In  Figure~\ref{fig:icdferr} the absolute error in the inverse distribution function 
for the direct inversion method  compared with the leading approximation method 
of Andersen (more on this presently) is shown in the case of the 
central chi-square distribution with $\nu=0.02$ degrees of freedom. 
In Figure~\ref{fig:momentserr} the relative errors in the first ten sample central 
moments are displayed for simulating the non-central chi-square distribution 
using our methods. As comparison methods we chose Andersen's approximation method,
whose underlying parameters are fixed to match the first two central moments, and
the exact acceptance-rejection methods of Ahrens--Dieter and Marsaglia--Tsang.
The six panels shown therein correspond to the three values of the degrees
of freedom $\nu=0.1,0.01,0.001$ (top to bottom), and the
values for the non-centrality $\lambda=0.11517,15.9501,0.15505, 15.995,0.1595,15.9995$
(left to right, then top to bottom). The number of samples used
in each case was $5\times 10^7$. We observe that the 
relative errors in all three exact acceptance-rejection methods as 
well as the direct inversion method achieve the same high accuracy in
all ten moments  across all the parameter values. Indeed their accuracy
is essentially limited by the Monte Carlo error which scales as the
reciprocal of the square root of the sample size. Indeed to
confirm this, we see in Figure~\ref{fig:momentserrlargesample}
how their relative errors and relative sample errors 
decrease when the sample size is increased by a factor of 
$10$ and then $10^2$. Any variation in the relative errors
between these four methods across all the plots in Figures~\ref{fig:momentserr}
and \ref{fig:momentserrlargesample} are within the corresponding 
sample errors. For the Andersen approximation method, we note
that in the top left panel and all three right-hand panels in
Figure~\ref{fig:momentserr}, that the first two
moments are indeed matched, while the accuracy in all
the other moments is larger by several orders of magnitude.
Further we observe in Figure~\ref{fig:momentserrlargesample}
that the error in this approximation is invariant to increasing the sample size,
thus exhibiting the bias in this method. 

However we note in the two lower left panels in Figure~\ref{fig:momentserr}
that as the number of degrees of freedom $\nu$ is decreased for small non-centrality, 
the performance of Andersen's approximation improves and indeed matches 
that of the other methods in the case $\nu=0.001$ and $\lambda=0.1595$.
This can be heuristically explained as follows. For small  non-centrality, the
Andersen approximation uses a weighted density function approximation
given by $p\,\delta(0)+(1-p)\,\beta\exp(-\beta x)$,  where the 
parameter $p=(s^2-m^2)/(s^2+m^2)$ characterizes the distribution 
between a mass density at the origin and a decaying exponential approximation
scaled by the parameter $\beta=(1-p)/m$. Here $m$ and $s^2$ represent 
the sample mean and variance, respectively, and this weighted approximation
is invoked when $s^2/m^2>1.5$.  In Figure~\ref{fig:momentserr}
to decrease $\nu$, we decreased $\kappa$; and note that we set
$V(0)=\theta$, $\varepsilon=1$ and $h=1$. A straightforward calculation
using the explicit values for $m$ and $s^2$ given in Andersen~\cite{An},
reveals that $s^2/m^2\sim1/2\theta$ and thus $p\sim(1-2\theta)/(1+2\theta)$ 
for $\kappa\ll 1$. For small $\theta$, we see $p\to1^-$. Note  that $\theta=0.04$
in Figure~\ref{fig:momentserr}. Since for small non-centrality the
non-central chi-square distribution approaches the central chi-square
distribution, and then for small degrees of freedom the central-square
distribution shifts to concentrate high probability of occurrence to
the origin where there is an integrable singularity, that the Andersen
approximation shifts more weight to the point mass at the origin
in this limit naturally shadows this phenomenon. Thus  
in this limit we might expect the Andersen approximation 
to perform better, as indicated in Figure~\ref{fig:momentserr}.

The opposite extreme of the parameter space to this last case
is that of very large non-centrality. We show in 
Figure~\ref{fig:momentserrlargenoncentrality} the 
the relative errors in the first ten sample central moments
for the case when $\nu=0.1$ and $\lambda=159.95$. 
We see that all the methods perform roughly equally
well in terms of accuracy across all the moments
(we have not plotted the Marsaglia--Tsang case as
this is in practice for the parameter regime $\nu <1$ 
of interest here slightly slower than the Ahrens--Dieter
method as we discuss below, though this may not be the case 
for parameter values $\nu >1$). In particular the Andersen
method performs equally well. This is not too surprising.
It reflects the fact that for large centrality 
the Andersen method utilizes an observation by Patnaik~\cite{Patnaik} 
that a very effective approximation to the non-central 
chi-square distribution function is generated by the square 
of a matched normal random variable.

 
The improved accuracy requires a higher computational effort. 
In Table~\ref{table:times} we list the CPU times for each method 
relative to the CPU times of the Andersen method for each of
the cases considered in Figures~\ref{fig:momentserr} and 
\ref{fig:momentserrlargenoncentrality}. For all the cases
considered in Figure~\ref{fig:momentserr} the generalized Marsaglia
and direct inversion methods require on average 
1.41 and 1.46 times respectively more effort. 
The methods of Marsaglia--Tsang and Ahrens--Dieter 
are on average 2.8 and 2.55 times respectively slower
(and thus hereafter we use the method of Ahrens--Dieter
in preference for comparison). For the case in 
Figure~\ref{fig:momentserrlargenoncentrality} the
generalized Marsaglia, direct inversion and Ahrens--Dieter 
methods require 5.25, 5.90 and 6.69 times respectively 
more effort. This is because for high non-centrality
more effort is required to sum the larger number (a Poisson
random variable with mean $\lambda/2$)
of exponential random variables that are used to simulate
the non-central component $\chi^2_0(\lambda)$ of the 
non-central $\chi_\nu^2(\lambda)$ random variable.

To be exhausting comprehensive, we also
calculated  relative CPU times when $\nu=0.777$ and
$\lambda=15.6164$. This parameter value belongs to the computationally 
most expensive parameter cases for the direct inversion method. 
This due to the unfavourable decomposition of $\nu=0.777$ in our fractional
basis $1/5,\, 1/10,\, 1/20,\, 1/50,\, \ldots$, and the requirement
to add powers of multiple generalized Gaussian variables, 
see Section~\ref{subsec:viaggdi}. 
This could be alleviated by using a finer decomposition.  
As expected the Ahrens--Dieter method performance was
unaffected (2.6803) while the direct inversion method
was slower (3.1916).  We emphasize that all our
calculations were performed using compiled Matlab code.
We endeavoured to optimize the performance of all the 
methods, for example using the Beasley--Springer--Moro methods 
for the Gaussian inversion required in the method of Andersen,
and so forth. 

\begin{table}
\begin{center}
\begin{tabular}{lcccc}\hline
$(\nu,\lambda)$ $\phantom{\hat{\Big|}}$& Marsaglia--Tsang & Ahrens--Dieter & Gen. Marsaglia & Direct inv. \\\hline
$(0.1,0.11517)$   & 2.7855  &  2.5689   & 1.4479   & 1.3718 \\  
$(0.1,15.9501)$   & 2.8185  & 2.5683   & 1.6986   & 1.6135 \\
$(0.01,0.1595)$   & 2.7953  & 2.5665   & 1.2207   & 1.4044 \\
$(0.01,15.9995)$  & 2.8137  &  2.5554  &  1.4813  &  1.5990 \\
$(0.001,0.1595)$  & 2.8300  &  2.5451  &  1.1640  &  1.3635\\
$(0.001,15.9995)$ & 2.7525  &   2.5326 &   1.3472 &   1.4918 \\
$(0.1,159.95)$    &$\cdots$&   6.6921 &   5.2517 &   5.9027 \\\hline
\end{tabular}
\end{center}
\caption{CPU times relative to the Andersen method to compute the 
moments in Figures~\ref{fig:momentserr} and \ref{fig:momentserrlargenoncentrality} 
for the methods and parameters shown.}
\label{table:times}
\end{table}

\subsection{The Heston Model}\label{subsec:Heston}
The CIR process is a main ingredient in the \emph{Heston model} (Heston~\cite{H}).
The Heston  model  is a two-factor model,
in which one component $S$ describes the evolution of a financial variable
such as a stock index or exchange rate,
and the second component $V$ is a CIR process that
describes the stochastic variance of its returns.
It is given by 
\begin{align*}
\mathrm{d}S_t=&\;\mu S_t\,\mathrm{d}t
+\sqrt{V_t}\,S_t\,\bigl(\rho\,\mathrm{d}W^1_t
+\sqrt{1-\rho^2}\,\mathrm{d}W^2_t\bigr),\\
\mathrm{d}V_t=&\;\kappa(\theta-V_t)\,\mathrm{d}t
+\varepsilon\sqrt{V_t}\,\mathrm{d}W^1_t,
\end{align*} 
where $W^1_t$ and $W^2_t$ are independent scalar 
Wiener processes. The parameters $\mu$, $\kappa$, $\theta$
and $\varepsilon$ are all positive and $\rho\in(-1,1)$.
In the context of option pricing, a pricing measure must be
specified. We assume here that the dynamics of $S$ and $V$ as specified
above are given under the pricing measure. For a discussion 
and derivation of various equivalent martingale measures in the Heston
model see for example Hobson~\cite{Ho}.
As noted above, the variance $V$ is non-negative, and the 
stock price $S$, as a pure exponential process,
is positive. Without loss of generality we suppose $\mu=0$.

To estimate the asset price we follow the lead of 
Broadie and Kaya~\cite{BK} and Andersen~\cite{An}
(also see Willard~\cite{Willard} and Romano and Touzi~\cite{RT}).
In the following proposition we assume we have simulated
$V_{t_{n+1}}$ from $V_{t_n}$ exactly---for example using the 
non-central chi-square simulation scheme based on the 
generalized Marsaglia approach.
\begin{proposition}\label{prop:priceprocesssim}
Across the time step $[t_n,t_{n+1}]$, assume 
$V_{t_n}$ and $V_{t_{n+1}}$ are given. Then set
$K_1=h(\kappa\rho/\eps-1/2)/2-\rho/\eps$,
$K_2=h(\kappa\rho/\eps-1/2)/2+\rho/\eps$,
$K_3=h(1-\rho^2)/2$, $s=K_2+K_3/2$,
$\hat s=s\cdot\exp(-\kappa h)/\eta(h)$ and 
for $\hat s<1/2$,
\begin{equation*}
K_0^\ast=-\frac{\lambda\hat s}{1-2\hat s}
+(\nu/2)\cdot\ln(1-2\hat s)-(K_1+K_3/2)V_{t_n}.
\end{equation*}
Then the approximate price process computed as follows
is a martingale:
\begin{equation*}
S_{t_{n+1}}=S_{t_n}\exp\Bigl(K_0^\ast+K_1V_{t_n}+K_2V_{t_{n+1}}+\sqrt{K_3(V_{t_n}+V_{t_{n+1}})}\cdot Z\Bigr),
\end{equation*}
where $Z\sim\mathrm{N}(0,1)$. 
\end{proposition}
\begin{proof}
With $V_{t_n}$ and $V_{t_{n+1}}$ as given, conditioned on the time 
integrated variance across $[t_n,t_{n+1}]$,
we know that $\ln S_{t_{n+1}}-\ln S_{t_n}$ is Normally distributed.  
As suggested by Andersen, we approximate the time integrated variance 
across $[t_n,t_{n+1}]$ by the trapezoidal rule. Exponentiating
we arrive at the scheme from Andersen~\cite[p.~21]{An}:
\begin{equation*}
S_{t_{n+1}}=S_{t_n}\exp\Bigl(K_0+K_1V_{t_n}+K_2V_{t_{n+1}}+\sqrt{K_3(V_{t_n}+V_{t_{n+1}})}\cdot Z\Bigr),
\end{equation*}
where $Z\sim\mathrm{N}(0,1)$ and $K_0=-h\rho\kappa\theta/\eps$.
Then as suggested in Proposition~7 of Andersen~\cite[p.~21]{An},
if we set $M\coloneqq\mathbb E\bigl[\exp(sV_{t_{n+1}})|V_{t_n}\bigr]$
and $K_0^\ast\coloneqq-\ln M-(K_1+K_3/2)V_{t_n}$,
and replace $K_0$ by $K_0^\ast$ in the scheme for $S_{t_{n+1}}$
above, then $\mathbb E[S_{t_{n+1}}|S_{t_n}]=S_{t_n}$.
Hence our task is to compute $M$. Since we simulate $V_{t_{n+1}}$
exactly we know $M=\mathbb E\bigl[\exp(\hat s\cdot z)|V_{t_n}\bigr]$,
where $z\sim\chi_\nu^2(\lambda)$, with $\nu$ and $\lambda$ defined 
for the Heston model. Hence provided $\hat s<1/2$ we have 
$M=\exp\bigl(\lambda\hat s/(1-2\hat s)\bigr)/(1-2\hat s)^{\nu/2}$,
giving the result.
\qed
\end{proof}
\begin{remark}
The requirement $\hat s<1/2$ translates to a mild restriction
on the stepsize $h$, which in practice is not a problem 
(see Andersen~\cite[p.~24]{An}).
\end{remark}
%

We test all the methods we have considered, Andersen,
Ahrens--Dieter, generalized Marsaglia and direct inversion
for pricing five practical and challenging options. 
We use Andersen's test cases I--III for pricing long-dated 
European call options (maturing at time $T$). 
Andersen describes case I as typical for FX markets, case II as
typical for long-dated interest rate markets and case III as possible
in equity option markets. We also considered Smith's test case for an Asian option 
(see Smith~\cite{Smith} and Haastrecht and Pelsser~\cite{HP})
and Lord, Koekkoek and Van Dijk's test case for a digital double no touch
barrier option. The parameter values for all five cases are shown 
in Table~\ref{table:cases}. Note that in case III, we have assumed the
risk-free rate of interest $r=0.05$, as in Haastrecht and Pelsser~\cite{HP}. 
Let the exact option price at maturity be $C$. The error of the approximation 
is $E=C-\hat C$, where $\hat{C}$ is the sample average of the simulated option payout
at maturity. In our examples, we use a sample size of $10^6$ (except for
the barrier option case). The performance of the method of
Haastrecht and Pelsser~\cite{HP} is similar to Andersen's; 
the reader interested in the actual comparisons is referred to 
their paper. In Table~\ref{table:Andcomptimings} we show,
for the test cases I--III, the relative CPU times to required to compute 
the option prices compared to Andersen's method. The errors 
at three different strikes $100,140,60$, which are dominated by
the trapezoidal rule approximation in the price process, are all comparable to
Andersen's method and so we omit them. We did not implement
any postprocessing such as variance reduction here. We see from
Table~\ref{table:Andcomptimings} that for these plain
vanilla option cases, Andersen's method is in fact the slowest.

We show in Table~\ref{table:asianbias} the simulation results for 
the Asian option with yearly fixings, with very similar conclusions
in terms of accuracy. The generalized Marsaglia and direct inverse
methods are now almost two times slower than Andersen's method
in this case due to the slightly unfavourable form of the degrees
of freedom $\nu=0.904$ (in these two cases we rounded
off the exact degrees of freedom $\nu=0.9035$).  However the 
accuracy they deliver for the variance process far outweighs
their relative speed.

We also apply the four methods to pricing
a digital double no touch barrier option---such an option pays one unit 
of currency if neither barrier is touched and zero if one is.
We monitor at each timestep to determine if either of the barriers had 
been crossed. Indeed, we show in Table~\ref{table:DDNTbias} 
our simulation results. In terms of accuracy for the stepsizes shown,
all the methods perform equally well. In terms of CPU time,
all the methods are faster or roughly the same speed as 
Andersen's method, though  the form of the 
number of degrees of freedom in this case $\nu=0.08$ favours 
the generalized Marsaglia and direct inversion methods.
Note that small timesteps are considered in this test case. 
This means that the non-centrality parameter $\lambda$ can
be large for some time intervals; in which case we use 
Algorithm~\ref{alg:Poisson2} to generate chi-square samples.
%
%

\begin{table}
\begin{center}
\begin{tabular}{cccccc}\hline
$\phantom{\hat{\Big|}}$ Parameters & Case I & Case II & Case III & Case Asian & Case DDNT\\\hline
$\eps$         & 1.0   & 0.9  & 1.0  & 0.5196  & 1.0  \\
$\kappa$    & 0.5   & 0.3  & 1.0  & 1.0407  & 0.5  \\
$\rho$         &-0.9   &-0.5  & -0.3 & -0.6747 & 0.0  \\
$T$              & 10    & 15   & 5    & 4       & 1    \\
$\theta$     & 0.04  & 0.04 & 0.09 & 0.0586  & 0.04 \\
$S(0)$         & 100   & 100  & 100  & 100     & 100  \\
$V(0)$         & 0.04  & 0.04 & 0.09 & 0.0194  & 0.04 \\
$r$               & 0.0   & 0.0  & 0.05 & 0.0     & 0.0  \\\hline
\end{tabular}
\end{center}
\caption{Cases I---III are from Andersen, while Case Asian is from 
Smith and Case DDNT (digital double no touch barrier option) 
is from Lord, Koekkoek and Van Dijk. Here $r$ is the risk-free rate of interest.}
\label{table:cases}
\end{table}

\begin{table}
\begin{center}
\begin{tabular}{lccc}\hline
Case $\phantom{\hat{\Big|}}$& Ahrens--Dieter & Marsaglia & Direct inv.\\\hline
I  & 0.62 &  0.27 &  0.26 \\  
II & 0.59 &  0.80 &  0.55 \\
III& 0.57 &  1.04 & 0.39\\\hline
\end{tabular}
\end{center}
\caption{Cases I---III from Andersen: we show the relative CPU times
(to the method of Andersen) for pricing the options concerned
using $10^6$ paths. The values shown are averaged across 
stepsizes $1/4$, $1/8$, $1/16$ and $1/32$---there was hardly
any variation for different stepsizes.}
\label{table:Andcomptimings}
\end{table}

\begin{table}
\begin{center}
\begin{tabular}{lcccc}\hline
Stepsize $\phantom{\hat{\Big|}}$& Andersen & Ahrens--Dieter &Marsaglia  & Direct inv. \\\hline
1/4  & [-0.0113,1] & [-0.0090,1.41]& [-0.0287,2.20] & [0.0719,2.19]\\  
1/8  & [-0.0166,1] & [-0.0151,1.30]& [-0.0073,2.02] & [0.0519,1.99]\\
1/16& [-0.0175,1]& [-0.0082,1.26]& [0.0240,1.95]   & [0.0628,1.90]\\
1/32& [-0.0287,1]& [-0.0376,1.24]& [-0.0393,1.90]  & [0.0604,1.86]\\\hline
\end{tabular}
\end{center}
\caption{Case Asian from Smith: Estimated error using $10^6$ paths for 
at the money Asian option (strike $100$) with 
yearly fixings. In each case the two-tuple shown is the estimated error
and relative CPU time required to compute the option price. 
In all cases the sample standard deviation was $0.014$.
All estimated errors are not statistically significant at the level 
of three sample standard deviations.
}
\label{table:asianbias}
\end{table}

\begin{table}
\begin{center}
\begin{tabular}{lcccc}\hline
$h$ $\phantom{\hat{\Big|}}$& Andersen & Ahrens--Dieter & Marsaglia & Direct inv.  \\\hline
1/250  & [0.5266,2.00,1]& [0.5300,0.63,0.76]& [0.5238,2.00,0.81]& [0.5300,2.00,0.96] \\
1/500  & [0.5205,1.00,1]& [0.5208,0.32,0.77]& [0.5191,1.00,0.83]& [0.5194,1.00,0.99] \\
1/1000& [0.5154,0.50,1]& [0.5150,0.16,0.78]& [0.5147,0.50,0.87]& [0.5148,0.50,1.03]\\
1/2000& [0.5111,0.25,1]& [0.5105,0.08,0.86]& [0.5109,0.25,0.87]& [0.5108,0.25,1.02]\\\hline
\end{tabular}
\end{center}
\caption{Case DDNT from Lord, Koekkoek and Van Dijk: 
Estimated option price for the digital double no touch
barrier option, using $1/h^2$ paths where $h$ is the stepsize. 
The barriers are 110 and 90. In each case the triple shown is the estimated price,
the sample standard deviation (inflated by $10^3$) and relative CPU time 
required to compute the option price. 
}
\label{table:DDNTbias}
\end{table}

\begin{remark}
Note that for cases I--III we could improve the efficiency of 
the algorithm we have implemented as follows (and with mild 
modification to the Asian option with yearly fixings as well).
We decompose $\int{V_\tau}\,\mathrm{d}\tau$ on $[0,T]$ 
into subintervals $[t_n,t_{n+1}]$, use a simple quadrature to approximate
$\int{V_\tau}\,\mathrm{d}\tau$ on these subintervals much like Andersen, 
and simulate the transition densities required using the generalized Marsaglia
method. We then only exponentiate at the final time $T$ to generate an 
approximation for $S_T$ (since we do not compute the price process at
each timestep, this will be more efficient).
However, one advantage of the approach we have taken in this
paper for simulating the price process 
based on the method proposed by Andersen, is that it is more flexible.
For example, it allows us to consider pricing path-dependent options.
\end{remark}
\begin{remark}
Glasserman and Kim~\cite{GK} have recently introduced
a novel method for simulating the time integrated variance process in
the Heston model (also see Chan and Joshi~\cite{CJ}). 
As we can see from our analysis above, to compute the
price process at the end-time $T$, we in essence need to sample from
the distribution for $\int{V_\tau}\,\mathrm{d}\tau$ on the interval $[0,T]$.
The transition density for this integral process over the whole
interval $[0,T]$, given $V_0$ and $V_T$,
is well known and given in Pitman and Yor~\cite{PY}. 
Its Laplace or Fourier transform has a closed form.
Broadie and Kaya~\cite{BK} use Fourier inversion techniques to
sample from this transition density for $\int{V_\tau}\,\mathrm{d}\tau$.
Glasserman and Kim instead separate the Laplace transform of this
transition density into constituent factors, each of which can
be interpreted as the Laplace transforms of probability densities,
samples of which can be generated by series of particular
gamma random variables. The advantage of this method is that
$\int{V_\tau}\,\mathrm{d}\tau$ is simulated
directly on the interval $[0,T]$. Glasserman and Kim have demonstrated that
this is an efficient alternative to the quadrature approximation
of the time integrated variance suggested by Andersen,
if one is interested in the pricing non-path-dependent derivatives, 
which does not require the simulation of any intermediate
values of the asset process $S$. They also note that
when pricing path-dependent options, quadrature approximation
of the time integral of the variance process will be more efficient
(see end of their Section~5).
\end{remark}

\section{Concluding remarks}\label{sec:conclu}
We have introduced two new methods for sampling the 
CIR non-central chi-square process. The first is
the generalized Marsaglia method which is an 
exact acceptance-rejection method. The second
is a direct inversion method based on the 
Beasley--Springer--Moro method which delivers
very high accuracy, and which in principle can be
extended to machine accuracy (double precision). This method
has the advantage of being 
amenable to implementation and simulation
using  quasi-Monte Carlo sequences as well as for
sensitivity analysis. 
Both methods are easy to implement and
flexible as the CIR process, which serves as a fundamental building
block in many financial models, 
can be immediately simulated for
any value of degrees of freedom. We illustrated their accuracy
and their efficiency 
for an extensive range of parameter values.
The efficiency performance of both methods  are similar and compare
well with other leading chi-square sampling methods.
We illustrated the use of our new methods for the
simulation of the Heston model.
In terms of simulating the Heston model the
accuracy delivered for the variance process
is somewhat overridden by the error associated with
the trapezoidal rule approximation used in the 
price process simulation. We expect that if
the new more accurate approximation method 
of Glasserman and Kim for the integrated 
variance process is used instead, the accuracy
available for the variance process will become
more prevalent. Lastly, another additional
direction of interest would be to consider how
to optimize both our methods for use in
general processing units (GPUs); see for example 
Giles~\cite{Giles} who considers an efficient
approximation of the inverse error function for
GPU execution.

\begin{acknowledgements}
We would like to thank two previous referees.
One for encouraging us to perform some numerical 
comparisons and also for pointing out the limit involving the 
Euler--Mascheroni constant, and another for bringing the 
Marsaglia--Tsang gamma distribution sampling method 
to our attention. We also like to thank Gavin Gibson,  
Mark Owen and Karel in 't Hout for stimulating discussions. 
A part of this paper was completed while AW was visiting 
the Fields Institute, Toronto, in Spring 2010.
Their support during the visit is gratefully acknowledged.
\end{acknowledgements}

\newpage

\appendix

\section{Generalized Gaussian direct inversion algorithm}\label{app:di}
Here we have assumed $(3,4)$ Pad\'e approximants in both the central and middle 
regions and a degree $10$ Chebychev approximant in the tail region.
Adapting the algorithm to other degree approximants is straightforward.
We assume $q$ is given and the parameters $\gamma_q$, $C_q$, $\Phi_\pm$, $\eta_*$, 
$k_1$ and $k_2$ defined in Section~\ref{subsec:di} have been calculated as well. 
In the algorithm below these parameters are: \texttt{gamma\_q}, \texttt{C\_q},
\texttt{Phi\_minus}, \texttt{Phi\_plus}, \texttt{eta\_star}, \texttt{k\_1} and
\texttt{k\_2}. Further the coefficients $a_0,a_1,a_2,a_3,c_0,c_1,c_2,c_3$ are
stored as the vectors \texttt{a} and \texttt{c} with index starting at \texttt{1},
so $a_0$ corresponds to \texttt{a(1)}, $a_1$ to \texttt{a(2)}, etc., whereas
the coefficients $b_1,b_2,b_3,b_4,d_1,d_2,d_3,d_4$ are
stored as the vectors \texttt{b} and \texttt{d} with exact indexing correspondence.
The coefficients $\hat c_0,\ldots,\hat c_{10}$ are stored as the vector \texttt{c\_\,hat}
with index running from \texttt{1} to \texttt{11}.
The input \texttt{U} is a $\text{U}(0,1)$ uniform random variable and the output \texttt{X} is
a generalized Gaussian random variable.
\lstset{language=Matlab,basicstyle=\ttfamily}
\begin{lstlisting}[frame=topline,caption={Generalized Gaussian direct inversion},label=alg:ggdil]
Y=(U-0.5)/gamma_q;
if (abs(Y)<(Phi_minus-0.5)/gamma_q)
   R=Y^q;
   X=Y*(((a(4)*R+a(3))*R+a(2))*R+a(1)) ...
       /((((b(4)*R+b(3))*R+b(2))*R+b(1))*R+1.0);
else
   R=1-U;
   if (Y<0)
       R=U;
   end
   if (abs(U-0.5)<Phi_plus-0.5)
      R=-log(R)-eta_star;
      X=(((c(4)*R+c(3))*R+c(2))*R+c(0)) ...
        /((((d(4)*R+d(3))*R+d(2))*R+d(1))*R+1.0);
   else
      R=k_1*log(-log(R/C_q))+k_2;
      D2=0;
      D1=0;
      for j=10:-1:1
        D=2*R*D1-D2+c_hat(j+1);
        D2=D1;
        D1=D;
      end
      X=R*D1-D2+0.5*c_hat(1);
   end
   if (Y<0)
      X=-X;
   end
end
\end{lstlisting}

\newpage

\section{Direct inversion coefficients}\label{app:coeffs}

\begin{center}
\begin{tabular}{ccccc}\hline\hline
\multicolumn{5}{c}{$q=10$ $\phantom{\hat{\Big|}}$}\\\hline
$\phantom{\hat{\Big|}}$ 
n & $a_n$               & $c_n$               &$n$ & $\hat c_n$       \\\hline
0 & 0.999999999999651   &  1.060540481693800 & 0  & 2.622284617034058 \\
1 & -1.429881128897603  &  0.796155091482938 & 1  & 0.1449805130767122 \\
2 & 0.601262815177118   &  0.206235219404016 & 2  & 0.003259370325482870 \\
3 & -0.068206095200774  &  0.020226513592948 & 3  & -0.0003397434921419157  \\
4 &                     &  0.000479843137311 & 4  & 0.00007014928432054771   \\
  &                     &                    & 5  & -0.000003586305447050563   \\
  &    $b_n$            &      $d_n$          & 6  & -7.631531772738493$*10^{-7}$  \\
1 &  -1.475335674435254 & 0.683564944492548  & 7  & 1.840112411709724$*10^{-7}$  \\
2 &  0.651548639035629  & 0.161851250036749  & 8  & -1.217436540241387$*10^{-8}$ \\
3 &  -0.081616351333977 & 0.014123257065970  & 9  & -2.007039183742053$*10^{-9}$ \\
4 &  0.000391957158842  & 0.000270655354670  & 10 & 5.694689247537491$*10^{-10}$  \\
  &                     &                    &    &                    \\
  &       $\Phi_-$     &  $\Phi_+$            &    & $\eta_*$          \\
  & 0.954178994865017  & 0.998325461835062   &    & 4.254756463685820  \\
  &       $k_1$        &  $k_2$               &    &                   \\
  & 1.015803736413048  & -2.256872281479897  &    &                    \\
\hline
\multicolumn{5}{c}{$q=100$ $\phantom{\hat{\Big|}}$}\\\hline
$\phantom{\hat{\Big|}}$ 
n & $a_n$               & $c_n$               &$n$ & $\hat c_n$                 \\\hline
0 & 0.999999999999675  &  1.006854352727258  & 0  &  2.053881658435666       \\
1 & -1.582783912975250 &  0.731099829867281  & 1  &  0.01034073560906051   \\
2 & 0.745662878312873  &  0.195140887172246  & 2  &  -0.00002227561625609034   \\
3 & -0.097011209317889 &  0.021377534151187  & 3  &  -0.00002539481838124709  \\
4 &                    &  0.000693785524959  & 4  &   0.000005279864625853940 \\
  &                    &                     & 5  &  -3.805077742014832$*10^{-7}$ \\
  &       $b_n$        &       $d_n$          & 6  &  -3.298923929556226$*10^{-8}$\\
1 & -1.587734408086399 & 0.720091560042297   & 7  &  1.114785567355588$*10^{-8}$ \\
2 & 0.751669594595948  & 0.190684456942316   & 8  &  -9.757310866073955$*10^{-10}$\\
3 & -0.098779495517049 & 0.020677157072949   & 9  &  -6.769928792613323$*10^{-11}$ \\
4 & 0.000055151948082  & 0.000659853425082   & 10 & 2.803447160471445$*10^{-11}$\\
5 &                    & -0.000000097154009  &    &                            \\
  &                    &                     &    &                             \\
  &       $\Phi_-$     &  $\Phi_+$            &    & $\eta_*$                   \\
  &  0.996245001605534 & 0.999888263643581   &    &  6.788371878124332          \\
  &       $k_1$        &  $k_2$               &    &                             \\
  & 1.130241473667677  &  -2.510876893558557 &    &                             \\
\hline
\multicolumn{5}{c}{$q=1000$ $\phantom{\hat{\Big|}}$}\\\hline
$\phantom{\hat{\Big|}}$ 
n & $a_n$               & $c_n$              &$n$ & $\hat c_n$                 \\\hline
0 & 0.999999999996602  & 1.000692386269727  &  0 &  2.005202593715361\\
1 & -2.214484997909744 & 0.641334743798204  &  1 & 0.0009445439483225688 \\
2 & 1.455225242281931  & 0.146036839714129  &  2 & -0.000003302159950131867 \\
3 & -0.270311067182453 & 0.012706381032885  &  3 & -0.000002214168006581846 \\
  &                    & 0.000254873053744  &  4 &  4.226488352359718$*10^{-7}$\\
  &                    &                    &  5 & -2.701556647692559$*10^{-8}$ \\
  & $b_n$              &  $d_n$              &  6 & -2.660957832417678$*10^{-9}$ \\
1 & -2.214984498880292 & 0.640293981966015  &  7 & 7.589961583952764$*10^{-10}$\\
2 & 1.456144357614843  & 0.145674000937359  &  8 & -5.604691330255176$*10^{-11}$ \\
3 & -0.270724201676657 & 0.012660105958367  &  9 & -5.184165197371945$*10^{-12}$ \\
4 & 0.000022976139411  & 0.000253400757973  & 10 &  1.626439890027763$*10^{-12}$\\
  &                    &                     &    &                             \\
  &       $\Phi_-$     &  $\Phi_+$            &    & $\eta_*$                   \\
  & 0.999632340672519  & 0.999989279922375   &    & 9.115135573141224          \\
  &       $k_1$        &  $k_2$               &    &                             \\
  & 1.211390454015218  & -2.630859238259484  &    &                             \\
\hline\hline
\end{tabular}
\end{center}

\newpage

\begin{center}
\begin{tabular}{ccccc}\hline\hline
\multicolumn{5}{c}{$q=5$ $\phantom{\hat{\Big|}}$}\\\hline
$\phantom{\hat{\Big|}}$ 
n & $a_n$               & $c_n$               &$n$ & $\hat c_n$       \\\hline
0 & 0.999999999999962   & 1.098560543273500  & 0  & 3.446913820849123\\
1 & -1.288113131377250  & 1.076929115611482  & 1  & 0.4032831311503550 \\
2 &  0.481578771462415  & 0.374009830584217  & 2  & 0.02171691866430493 \\
3 & -0.047325498551885  & 0.052979687032815  & 3  & -0.0003693910171288154  \\
4 &                     & 0.002320423236613  & 4  & 0.0001643478336907373  \\
  &                     &                    & 5  &  -9.968595138386470$*10^{-7}$ \\
  &    $b_n$            &      $d_n$          & 6  &  -0.000002678275851276468\\
1 & -1.371446464722253  &  0.826900637356423 & 7  &  4.340978214450863$*10^{-7}$  \\
2 & 0.565562947138128   & 0.243236630017604  & 8  &  -8.359190851308088$*10^{-9}$ \\
3 & -0.067614258837771  & 0.028035297860946  & 9  &  -8.345847144867538$*10^{-9}$\\
4 & 0.000560269685859   & 0.000866824877085  & 10 &  1.501644408119446$*10^{-9}$ \\
5 &                     & -0.000003203247416 &    &                    \\
  &                     &                    &    &                    \\
  &       $\Phi_-$      &  $\Phi_+$           &    & $\eta_*$          \\
  & 0.888435024173769  & 0.994853658080896   &    & 3.327051730489134  \\
  &       $k_1$         &  $k_2$              &    &                   \\
  & 0.9436583821081551  & -2.053011104657458 &    &                    \\
\hline 
\multicolumn{5}{c}{$q=50$ $\phantom{\hat{\Big|}}$}\\\hline
$\phantom{\hat{\Big|}}$ 
n & $a_n$               & $c_n$               &$n$ & $\hat c_n$       \\\hline
0 & 0.999999999999476   & 1.013549868031473  & 0  & 2.109911415053862\\
1 & -1.564458809116706  & 0.667930936229205  & 1  & 0.02170729783674736 \\
2 & 0.727524267692390   & 0.155499481214690  & 2  & 0.000005070046142588313 \\
3 & -0.093190467403288  & 0.013822381509740  & 3  & -0.00005423065778776348  \\
4 &                     & 0.000286136307044  & 4  & 0.00001132985142947280  \\
  &                     &                    & 5  &  -8.168939038096652$*10^{-7}$ \\
  &    $b_n$            &      $d_n$          & 6  & -7.483150884891461$*10^{-8}$ \\
1 &  -1.574262730786144 & 0.646800829595665  & 7  & 2.519301860662225$*10^{-8}$   \\
2 & 0.739293882226217   & 0.147982563681919  & 8  & -2.231412272701264$*10^{-9}$  \\
3 & -0.096615964501066  & 0.012852173023820  & 9  & -1.581249637852744$*10^{-10}$ \\
4 & 0.000106012694153   & 0.000254893106908  & 10 & 6.651970666939820$*10^{-11}$  \\
  &                     &                    &    &                    \\
  &       $\Phi_-$      &  $\Phi_+$           &    & $\eta_*$          \\
  & 0.992313833379312   & 0.999766047505894  &    & 6.068592841104139  \\
  &       $k_1$         &  $k_2$              &    &                   \\
  & 1.104377984691796   & -2.464549291690036 &    &                    \\
\hline
\multicolumn{5}{c}{$q=500$ $\phantom{\hat{\Big|}}$}\\\hline
$\phantom{\hat{\Big|}}$ 
n & $a_n$               & $c_n$               &$n$ & $\hat c_n$       \\\hline
0 & 1.000000000001737   & 1.001383246165305  & 0  & 2.010495391375142\\
1 & -1.391636943669522  & 0.643915677152308  & 1  & 0.001933087453438041  \\
2 & 0.529242181140450   & 0.147299234948298  & 2  & -0.000006861856172820585 \\
3 & -0.043796708347591  & 0.012900083582821  & 3  & -0.000004593728077441438  \\
4 &                     & 0.000260965296708  & 4  & 9.054993169060948$*10^{-7}$  \\
  &                     &                    & 5  &  -6.064260759136934$*10^{-8}$ \\
  &    $b_n$            &      $d_n$          & 6  & -5.622897885014598$*10^{-9}$ \\
1 & -1.392634947413242  & 0.641830834617498  & 7  & 1.709975585729971$*10^{-9}$   \\
2 & 0.530257903665352   & 0.146569705446697  & 8  & -1.346040760729119$*10^{-10}$  \\
3 & -0.044005529297217  & 0.012806476344300  & 9  & -1.115867483028221$*10^{-11}$ \\
4 &                     & 0.000257963030654  & 10 &  3.854801783433290$*10^{-12}$ \\
  &                     &                    &    &                    \\
  &       $\Phi_-$      &  $\Phi_+$           &    & $\eta_*$          \\
  & 0.999262947245193   & 0.999978460704705  &    & 8.419292019151525  \\
  &       $k_1$         &  $k_2$              &    &                   \\
  & 1.186173840297473   & -2.595663671659387 &    &                    \\
\hline\hline
\end{tabular}
\end{center}

\newpage

\begin{center}
\begin{tabular}{ccccc}\hline\hline
\multicolumn{5}{c}{$q=20$ $\phantom{\hat{\Big|}}$}\\\hline
$\phantom{\hat{\Big|}}$ 
n & $a_n$               & $c_n$               &$n$ & $\hat c_n$       \\\hline
0 & 0.999999999999362   & 1.032613218276406  & 0  & 2.288521173202021\\
1 & -1.511386771163247  & 0.713123517283527  & 1  & 0.06094747962391468  \\
2 & 0.676248487105209   & 0.172466963210780  & 2  & 0.0004979897167054818 \\
3 & -0.082692565611279  & 0.015897412011461  & 3  & -0.0001543431236869771  \\
4 &                     & 0.000346782310140  & 4  & 0.00003160495474050310  \\
  &                     &                    & 5  &  -0.000002097939830730025 \\
  &    $b_n$            &      $d_n$          & 6  & -2.545431143318975$*10^{-7}$ \\
1 & -1.535196295102288  & 0.659013658465100  & 7  & 7.644348818405340$*10^{-8}$   \\
2 & 0.703944802836950   & 0.152518516830577  & 8  & -6.332863001383525$*10^{-9}$  \\
3 & -0.090485340265877  & 0.013264787335393  & 9  & -5.895856800809525$*10^{-10}$ \\
4 & 0.000235676079099   & 0.000260023340056  & 10 & 2.181280362294034$*10^{-10}$  \\
  &                     &                    &    &                    \\
  &       $\Phi_-$      &  $\Phi_+$           &    & $\eta_*$          \\
  & 0.979433650152057   & 0.999329809791150  &    & 5.073863838784834  \\
  &       $k_1$         &  $k_2$              &    &                   \\
  & 1.062049352492145   & -2.373877078913848  &    &                    \\
\hline 
\multicolumn{5}{c}{$q=200$ $\phantom{\hat{\Big|}}$}\\\hline
$\phantom{\hat{\Big|}}$ 
n & $a_n$               & $c_n$               &$n$ & $\hat c_n$       \\\hline
0 & 0.999999999999818   & 1.003446599107830   & 0  & 2.026585952893378\\
1 & -1.592059576219168  & 0.646500675620922   & 1  & 0.005000572651842971 \\
2 & 0.754928347929758   & 0.147774949658364   & 2  & -0.00001586472529618898 \\
3 & -0.098984011870256  & 0.012900041834309   & 3  & -0.00001209785441132469  \\
4 &                     & 0.000259966165110   & 4  & 0.000002470674959854400  \\
  &                     &                     & 5  &  -1.739180236341302$*10^{-7}$ \\
  &    $b_n$            &      $d_n$           & 6  & -1.521404951930557$*10^{-8}$ \\
1 & -1.594547138442268  & 0.641281532007261   & 7  & 4.971592453081240$*10^{-9}$   \\
2 & 0.757962823762529   & 0.145949090224436   & 8  & -4.195656728956311$*10^{-10}$  \\
3 & -0.099882437363767  & 0.012666874097162   & 9  & -3.074520478566509$*10^{-11}$ \\
4 & 0.000028130467837   & 0.000252543685829   & 10 & 1.193471454418927$*10^{-11}$  \\
  &                     &                     &    &                    \\
  &       $\Phi_-$      &  $\Phi_+$            &    & $\eta_*$          \\
  & 0.998144331394750   & 0.999945402061219   &    & 7.494926977014854 \\
  &       $k_1$         &  $k_2$               &    &                   \\
  & 1.154337013616336   & -2.549188505020307  &    &                    \\
\hline
\multicolumn{5}{c}{$q=2000$ $\phantom{\hat{\Big|}}$}\\\hline
$\phantom{\hat{\Big|}}$ 
n & $a_n$               & $c_n$               &$n$ & $\hat c_n$       \\\hline
0 & 0.999999999997019   & 1.000346383496690  & 0  & 2.002579775991956\\
1 & -1.455537231516898  & 0.688914570824710  & 1  & 0.0004617142213663357  \\
2 & 0.588358222689532   & 0.174806929040730  & 2  & -0.000001526635305931853 \\
3 & -0.053434409210900  & 0.017594316937225  & 3  & -0.000001066615247993659  \\
4 &                     & 0.000421430584942  & 4  & 1.965467095261218$*10^{-7}$  \\
  &                     &                    & 5  & -1.191943072817269$*10^{-8}$  \\
  &    $b_n$            &      $d_n$          & 6  & -1.255808491556965$*10^{-9}$ \\
1 & -1.455787106562329  & 0.688377683635393  & 7  & 3.345974182790842$*10^{-10}$  \\
2 & 0.588628288218368   & 0.174601235888502  & 8  & -2.298521347563630$*10^{-11}$  \\
3 & -0.053495358024573  & 0.017563762752719  & 9  & -2.387233298590672$*10^{-12}$ \\
4 &                     & 0.000420251890153  & 10 & 6.789837048813090$*10^{-13}$  \\
  &                     &                    &    &                    \\
  &       $\Phi_-$      &  $\Phi_+$           &    & $\eta_*$          \\
  & 0.999816386904579   & 0.999994652315274  &    & 9.809631850391680  \\
  &       $k_1$         &  $k_2$              &    &                   \\
  & 1.238207674406019   & -2.667612981028375 &    &                    \\
\hline\hline
\end{tabular}
\end{center}


\begin{thebibliography}{99}

\bibitem{AD} \textsc{J.H. Ahrens and U.~Dieter}, 
\emph{Computer generation of Poisson deviates from modified
normal distributions}, ACM Transactions on Mathematical Software,
8(2) (1982), pp.~163--179.

\bibitem{AD:chi} \textsc{J.H. Ahrens and U.~Dieter}, 
\emph{Computer methods for sampling from gamma, beta, Poisson, 
and binomial distributions}, Computing, 12 (1974), pp.~223--246.

\bibitem{Al2005} \textsc{A.~Alfonsi}, 
\emph{On the discretization schemes for the CIR (and Bessel squared) processes},
Monte Carlo Methods and Applications, 11(4) (2005), pp.~355--384.

\bibitem{Al2007} \textsc{A.~Alfonsi}, 
\emph{High order discretization schemes for the CIR process:
application to affine term structure and Heston models}, 
Math. Comp., 79(269) (2010), pp.~209--237.

\bibitem{An} \textsc{L.~Andersen},
\emph{Simple and efficient simulation of the Heston stochastic volatility model},
Journal of Computational Finance, 11(3) (2008), pp.~1--42.

\bibitem{AP} \textsc{L.B.G.~Andersen and V.V.~Piterbarg},
\emph{Moment explosions in stochastic volatility models},
Finance and Stochastics, 11(1) (2007), pp.~29--50.

\bibitem{BS} \textsc{Beasley and Springer}, \emph{The percentage
points of the normal distribution}, Applied Statistics, 26(1) (1977), pp.~118--121.

\bibitem{BBD} \textsc{A.~Berkaoui, M.~Bossy and A.~Diop},
\emph{Euler scheme for SDEs with non-Lipschitz diffusion 
coefficient: strong convergence}, 
ESAIM Probability and Statistics, 12 (2008), pp.~1--11.

\bibitem{Billingsley} \textsc{P.~Billingsley},
\emph{Probability and measure}, Wiley series in
Probability and Mathematical Statistics, 
John Wiley \& Sons, Inc., 1995.

\bibitem{BEJ} \textsc{J.M. Blair, C.A. Edwards and J.H.~Johnson}, {\em 
Rational Chebyshev approximations for the inverse of the error
function},
Math.~Comp.~30(136) (1976), pp.~827--830.

\bibitem{BD} \textsc{M.~Bossy and A.~Diop},
\emph{An efficient discretization scheme for one
dimensional SDEs with a diffusion coefficient 
function of the form $|x|^\alpha$, $\alpha\in[1/2,1)$}, 
INRIA Rapport de recherche n$^\circ$ 5396, 2007.

\bibitem{BC} \textsc{D.~Brigo and K.~Chourdakis},
\emph{Counterparty risk for credit default swaps},
International Journal of Theoretical and Applied Finance,
12(7) (2009), pp.~1007--1026.

\bibitem{BK} \textsc{M.~Broadie and \"O.~Kaya},
\emph{Exact simulation of stochastic volatility 
and other affine jump diffusion processes}, 
Operations Research, 54(2) (2006), pp.~217--231.

\bibitem{CM} \textsc{P.~Carr and D.B. Madan},
\emph{Option valuation using the fast Fourier transform},
Journal of Computational Finance, 2(4) (1999), pp.~61--73.

\bibitem{CH} \textsc{G.D. Chalmers and D.J. Higham},
\emph{First and second moment reversion for a discretized
square root process with jumps}, Preprint December 17, 2008.

\bibitem{CJ} \textsc{J.H.~Chan and M.~Joshi},
\emph{Fast and accurate long stepping simulation 
of the heston stochastic volatility model}, SSRN preprint, 2010.

\bibitem{CIR} \textsc{J.C.~Cox, J.E.~Ingersoll and S.A.~Ross},
\emph{A theory of the term structure of interest rates},
Econometrica, 53(2) (1985), pp.~385--407.

\bibitem{DD} \textsc{G.~Deelstra and F.~Delbaen},
\emph{Convergence of discretized stochastic (interest rate)
processes with stochastic drift term},
Appl. Stochastic Models Data Anal., 14 (1998), pp.~77--84.

\bibitem{DNS} \textsc{S.~Dereich, A.~Neuenkirch and L.~Szpruch}, 
\emph{An Euler-type method for the strong approximation 
of the Cox-Ingersoll-Ross process}, Proc R Soc A, 468 (2012),
pp.~1105--1115, doi:10.1098/rspa.2011.0505.

\bibitem{DG} \textsc{D.~Duffie and P.~Glynn},
\emph{Efficient Monte Carlo simulation of security prices},
The Annals of Applied Probability, 5(4) (1995), pp.~897--905.

\bibitem{Dy} \textsc{S.~Dyrting},
\emph{Evaluating the non-central chi-square distribution
for the Cox--Ingersoll--Ross process},
Computational economics, 24 (2004), pp.~35--50.

\bibitem{FO1} \textsc{F.~Fang and K.~Oosterlee},
\emph{Pricing early-exercise and discrete barrier options
by Fourier-cosine series expansions}, 
Numer. Math., 114 (2009), pp.27--62.

\bibitem{FO2} \textsc{F.~Fang and K.~Oosterlee},
\emph{A novel pricing method for European options
based on Fourier-cosine series expansions}, 
SIAM J. Sci. Comput., 30(2) (2008), pp.~826--848.

\bibitem{Feller} \textsc{W.~Feller},
\emph{Two singular diffusion problems},
Annals of Mathematics, 54 (1951), pp.~173--182.

\bibitem{Fish} 
\textsc{G.S. Fishman}, \emph{Monte Carlo: concepts, algorithms, and
applications}, Springer, 1995.

\bibitem{Giles} \textsc{M.~Giles}, \emph{Approximating the erfinv function},
Chapter~10 in GPU Computing Gems, Jade Edition, 
Editor-in-Chief: Wen-Mei W.~Hwu,
Morgan Kaufmann's Applications of GPU Computing Series, 2012.

\bibitem{Glass} \textsc{P.~Glasserman},
\emph{Monte Carlo methods in financial engineering},
Applications of Mathematics, Stochastic Modelling and
Applied Probability, 53, Springer, 2003.

\bibitem{GK} \textsc{P.~Glasserman and K.~Kim},
\emph{Gamma expansion of the Heston stochastic volatility model},
Finance Stoch., 15(2) (2011), pp.~267--296, DOI 10.1007/s00780-009-0115-y.

\bibitem{GS} \textsc{A.K. Gupta and D.~Song},
\emph{$L_p$-norm spherical distribution},
Journal of Statistical Planning and Inference, 60 (1997),
pp.~241--260.

\bibitem{HP} \textsc{A.~van Haastrecht and A.~Pelsser},
\emph{Efficient, almost exact simulation of the 
Heston stochastic volatility model}, 
International Journal of Theoretical and Applied Finance,
13(1) (2010), pp.~1--43, DOI 10.1142/SO219024910005668.

\bibitem{HHW_karel} \textsc{T.~Haentjens  and K.~in 't Hout},
\emph{ADI finite difference schemes for the Heston-Hull-White PDE}, 
arXiv:1111.4087v1 (2012). 

\bibitem{HL} \textsc{R.~Harman and V.~Lacko}, \emph{On decompositional
  algorithms for uniform sampling from $n$-spheres and $n$-balls},
Journal of Multivariate Analysis, 101 (2010), pp.~2297--2304.

\bibitem{H} \textsc{S.L.~Heston},
\emph{A closed-form solution for options 
with stochastic volatility with applications 
to bond and currency options},
Review of Financial Studies, 6(2) (1993), pp.~327--343. 

\bibitem{Ho} \textsc{D.~Hobson}, 
\emph{Stochastic volatility models, correlation, and the q-optimal
measure}, Mathematical Finance, 14 (2004), pp.~537-556. 

\bibitem{HM} \textsc{D.J.~Higham and X.~Mao},
\emph{Convergence of Monte--Carlo simulations involving
the mean-reverting square root process}, 
Journal of Computational Finance, 8 (2005), pp.~35--61.

\bibitem{HMS} \textsc{D.J.~Higham, X.~Mao and A.M.~Stuart},  
\emph{Strong Convergence of Numerical Methods for 
Nonlinear Stochastic Differential Equations}, 
SIAM J. Num. Anal, 40 (2002), pp.~1041--1063.

\bibitem{Karel} \textsc{K.~In 't Hout and S.~Foulin} \emph{ADI finite
    difference schemes for option pricing in the Heston model 
with correlation}, International Journal of Numerical Analysis and
Modeling, 7(2), pp.~303--320.

\bibitem{JKN} \textsc{A.~Jentzen, P.E.~Kloeden and A.~Neuenkirch},
\emph{Pathwise approximation of stochastic differential equations 
on domains: Higher order convergence rates without global Lipschitz
coefficients}, Numerische Mathematik, 112(1) (2009), pp~41--64. 

\bibitem{Jo1959} \textsc{N.L. Johnson}, 
\emph{On an extension of the connection between Poisson
and $\chi^2$ distributions},
Biometrika, 46 (3/4) (1959), pp.~352--363. 

\bibitem{JBT} \textsc{C. Joy, P.P.~Boyle and K.S.~Tan},
  \emph{Quasi-Monte Carlo methods in numerical finance}, Management
  Science, 42(6) (1996), pp.~926--938.

\bibitem{KGR} \textsc{C.~Kahl, M.~G\"unther and T.~Ro\ss berg},
\emph{Structure preserving stochastic integration schemes in
interest rate derivative modeling}, 
Appl. Numer. Math., 58(3) (2008), pp.~284--295.

\bibitem{KJ1} \textsc{C.~Kahl and P.~J\"ackel},
\emph{Fast strong  approximation Monte Carlo schemes for stochastic volatility models},
Quant. Finance,  6(6) (2006), pp.~513--536. 

\bibitem{KJ2} \textsc{C.~Kahl and P.~J\"ackel},
\emph{Not-so-complex logarithms in the Heston Model},
Wilmott, September 2005, pp.~94--103. 

\bibitem{KS} \textsc{C.~Kahl and H.~Schurz},
\emph{Balanced Milstein methods for ordinary SDEs},
Monte Carlo Methods Appl., 12(2) (2006), pp.~143--170. 

\bibitem{KT} \textsc{S.~Karlin and H.M.~Taylor},
\emph{A second course in stochastic processes},
Academic Press, 1981.

\bibitem{Knuth} \textsc{D.E. Knuth},
\emph{The art of computer programming, vol 2: Seminumerical algorithms},
Addison-Wesley, Reading, Mass., Third edition, 1998.

\bibitem{LH} \textsc{V.~Lacko and R.~Harman}, 
\emph{A conditional distribution approach to uniform sampling on spheres and balls in
$L_p$ spaces}, Metrika (2011), DOI: 10.1007/s00184-011-0360-x.

\bibitem{LNg} \textsc{J.~Liang and K.W.~Ng}, 
\emph{A method for generating
uniformly scattered points on the $L_p$-norm unit sphere and its
applications}, Metrika, 68(1) (2008), pp.~83--98. 

\bibitem{Lipton} \textsc{A.~Lipton},
\emph{Mathematical methods for foreign exchange:
A financial engineer's approach},
World Scientific, 2001.

\bibitem{LS} \textsc{A.~Lipton and A.~Sepp},
\emph{Stochastic volatility models and Kelvin waves},
J. Phys. A: Math. Theor., 41 (2008), pp.~1--23. 

\bibitem{LMW} \textsc{G.~Lord, S.~J.~A. Malham and A.~Wiese},
\emph{Efficient strong integrators for linear stochastic systems},
SIAM J,\ Numer.\ Anal., 46(6) (2008), pp.~2892--2919.

\bibitem{LKVD} \textsc{R.~Lord, R.~Koekkoek and D.~van Dijk},
\emph{A comparison of biased simulation schemes for stochastic
volatility models}, Quantitative Finance, 10(2) (2010), pp.~177--194. 

\bibitem{MWLG} \textsc{S.J.A. Malham and A.~Wiese},
\emph{Stochastic Lie group integrators},
SIAM J. Sci. Comput., 30(2) (2008), pp.~597--617.

\bibitem{MWHA} \textsc{S.J.A. Malham and A.~Wiese},
\emph{Stochastic expansions and Hopf algebras},
Proc.\ R.\ Soc. A, 465 (2009), pp.~3729--3749.

\bibitem{Mar} \textsc{G.~Marsaglia},
\emph{Improving the polar method for generating 
a pair of random variables}, 
Boeing Sci. Res. Lab., D1-82-0203, 1962.

\bibitem{Mar:exap} \textsc{G.~Marsaglia},
\emph{The exact-approximation method for generating random variables
in a computer}, Journal of the American Statistical Association, 
79(385) (1984), pp.~218--221.

\bibitem{MarTsa} \textsc{G.~Marsaglia and W.W.~Tsang}, \emph{A simple
    method for generating gamma variables}, ACM Transactions on
  Mathematical Software, 26(3) (2000), pp.~363--372.

\bibitem{Meyer} \textsc{C.D. Meyer},
\emph{Matrix analysis and applied linear algebra}, SIAM, 2000.

\bibitem{Moro} \textsc{B. Moro}, \emph{The full Monte}, Risk, 8(2)
  (1995), pp.~57--58. 

\bibitem{MS} \textsc{E.~Moro and H.~Schurz},
\emph{Boundary preserving semianalytical numerical algorithms 
for stochastic differential equations},
SIAM J. Sci. Comput., 29(4) (2007), pp.~1525--1549.

\bibitem{NV} \textsc{S.~Ninomiya and N.~Victoir},
\emph{Weak approximation of stochastic differential equations
and application to derivative pricing},
Applied Math. Fin., 15 (2008), pp.~107--121.

\bibitem{NN} \textsc{M.~Ninomiya and S.~Ninomiya},
\emph{A new higher-order weak approximation scheme
for stochastic differential equations
and the Runge--Kutta method},
Finance Stoch., 13 (2009), pp.~415--443,
DOI 10.1007/s00780-009-0101-4.

\bibitem{Patnaik} \textsc{P.B. Patnaik},
\emph{The non-central $\chi^2$- and $F$-distributions
and their applications}, Biometrika, 36 (1949), pp.~202--232.

\bibitem{PY} \textsc{J.~Pitman, and M.~Yor}, 
\emph{A decomposition of Bessel bridges},
Z. Wahrscheinlichkeitstheor. Verw. Geb., 59 (1982),
pp.~425--457.

\bibitem{PN} \textsc{T.K.~Pog\'any and S.~Nadarajah},
\emph{On the characteristic function of the generalized normal distribution},
C. R. Acad. Sci. Paris, Ser. I 348 (2010), pp.~203–-206.

\bibitem{PTVF} \textsc{W.H.~Press, S.A.~Teukolsky, W.T.~Vetterling
and B.P.~Flannery}, \emph{Numerical recipes in C: 
The art of scientific computing}, Second Edition, 
Cambridge University Press, 1992.

\bibitem{RY} \textsc{D.~Revuz and M.~Yor},
\emph{Continuous Martingales and Brownian motion},
Springer-Verlag, 1991.

\bibitem{RT} \textsc{M.~Romano and N.Touzi}, \emph{Contingent claims
and market completeness in a stochastic volatility model},
Mathematical Finance, 7(4) (1997), pp.~399--410.

\bibitem{Sc} \textsc{H.~Schurz},
\emph{A brief introduction to numerical
analysis of (ordinary) stochastic differential equations
without tears}, in  Handbook of Stochastic Analysis and Applications,
V.~Lakshmikantham and D.~Kannan, eds.,
Marcel Dekker, 2002, pp.~237--359.

\bibitem{Sepp} \textsc{A.~Sepp}, 
\emph{Pricing options on realized variance in the Heston model
with jumps in returns and volatility},
The Journal of Computational Finance, 11(4) (2008), pp.~33-70.

\bibitem{Shaw:wkg} \textsc{W.~Shaw}, 
\emph{Refinement of the Normal quantile: A benchmark Normal quantile
based on recursion, and an appraisal of Beasley--Springer--Moro, Acklam
and Wichura (AS241) methods}, Working paper February 20, 2007.

\bibitem{Shaw:stT} \textsc{W.~Shaw}, 
\emph{Sampling Student's T distribution---use of the inverse cumulative
distribution function}, Journal of Computational Finance, 9(4) (2006),
pp.~37--73.

\bibitem{Siegel} \textsc{A.F.~Siegel}, 
\emph{The noncentral chi-squared
distribution with zero degrees of freedom and testing for uniformity},
Biometrika, 68(2) (1979), pp.~381--386.

\bibitem{SB} \textsc{F.~Sinz and M.~Bethge},
\emph{How much can orientation selectivity and contrast gain 
control reduce redundancies in natural images},
Max-Planck-Institute f\"ur Biologische Kybernetik,
Technical Report, March 2008.

\bibitem{SGB} \textsc{F.~Sinz, S.~Gerwinn and M.~Bethge}
\emph{Characterization of the $p$-generalized normal distribution},
Journal of Multivariate Analysis, 100 (2009), pp.~817--820.

\bibitem{Smith} \textsc{R.D. Smith}, 
\emph{An almost exact simulation method for the Heston model},
Journal of Computational Finance, 11(1) (2007), pp.~115--125.

\bibitem{SG} \textsc{D.~Song and A.K. Gupta},
\emph{$L_p$-norm uniform distribution}, 
Proceedings of the AMS, 125(2) (1997), pp.~595--601.

\bibitem{WW} \textsc{J.~Wilkie and Y.M. Wong},
\emph{Positivity preserving chemical Langevin equations},
Chemical Physics, 353 (2008), pp.~132--138.

\bibitem{Willard} \textsc{G.A.~Willard}, \emph{Calculating prices and
sensitivities for path-independent derivatives securities in 
multifactor models}, The Journal of Derivatives, 5(1) (1997), pp.~45--61.

\bibitem{WN} \textsc{C.S. Withers and S. Nadarajah},
\emph{A recurrence relation for moments of the non-central chi-square},
The American Statistician, 61(4) (2007), pp.~337--338.

\end{thebibliography}
\end{document}